\documentclass[a4paper,UKenglish,cleveref,thm-restate]{lipics-v2021}

\usepackage{nicefrac}
\usepackage{xspace}
\usepackage{complexity}

\usepackage[subrefformat=simple,labelformat=simple]{subcaption}

\crefname{claim}{Claim}{Claims}

\hideLIPIcs

\graphicspath{{./figures/}}

\bibliographystyle{plainurl}

\title{The Dispersive Art Gallery Problem}

\titlerunning{The Dispersive Art Gallery Problem}

\author{Christian Rieck}{Department of Computer Science, TU Braunschweig, Braunschweig, Germany}{rieck@ibr.cs.tu-bs.de}{https://orcid.org/0000-0003-0846-5163}{}
\author{Christian Scheffer}{Faculty of Electrical Engineering and Computer Science, Bochum University of Applied Sciences, Bochum, Germany}{christian.scheffer@hs-bochum.de}{https://orcid.org/0000-0002-3471-2706}{}

\authorrunning{C. Rieck and C. Scheffer}

\Copyright{Christian Rieck and Christian Scheffer}

\ccsdesc{Theory of computation~Computational geometry} 

\keywords{Art gallery, dispersion, polyominoes, NP-completeness, $r$-visibility, vertex guards, $L_1$-metric, worst-case optimal} 

\acknowledgements{We thank Joseph S. B. Mitchell for bringing this problem to our attention.}

\nolinenumbers

%Editor-only macros:: begin (do not touch as author)%%%%%%%%%%%%%%%%%%%%%%%%%%%%%%%%%%
\EventEditors{}
\EventNoEds{0}
\EventLongTitle{International Symposium on Algorithms and Computation}
\EventShortTitle{ISAAC 2022}
\EventAcronym{ISAAC}
\EventYear{2022}
\EventDate{}
\EventLocation{}
\EventLogo{}
\SeriesVolume{}
\ArticleNo{73}
%%%%%%%%%%%%%%%%%%%%%%%%%%%%%%%%%%%%%%%%%%%%%%%%%%%%%%

\newcommand{\guardset}{\mathcal{G}}
\newcommand{\polygon}{\mathcal{P}}
\newcommand{\visregion}{\mathcal{V}}

\widowpenalty10000
\clubpenalty10000

\begin{document}

\maketitle

\begin{abstract}
	We introduce a new variant of the art gallery problem that comes from safety issues. 
	In this variant we are not interested in guard sets of smallest cardinality, but in guard sets with largest possible distances between these guards. 
	To the best of our knowledge, this variant has not been considered before.
	We call it the \textsc{Dispersive Art Gallery Problem}. 
	In particular, in the dispersive art gallery problem we are given a polygon $\polygon$ and a real number $\ell$, and want to decide whether $\polygon$ has a guard set such that every pair of guards in this set is at least a distance of $\ell$ apart.

	In this paper, we study the vertex guard variant of this problem for the class of polyominoes. We consider rectangular visibility and distances as geodesics in the $L_1$-metric.
	Our results are as follows.
	We give a (simple) thin polyomino such that every guard set has minimum pairwise distances of at most $3$.
	On the positive side, we describe an algorithm that computes guard sets for simple polyominoes that match this upper bound, i.e., the algorithm constructs worst-case optimal solutions.
	We also study the computational complexity of computing guard sets that maximize the smallest distance between all pairs of guards within the guard sets. 
	We prove that deciding whether there exists a guard set realizing a minimum pairwise distance for all pairs of guards of at least $5$ in a given polyomino is \NP-complete.

	We were also able to find an optimal dynamic programming approach that computes a guard set that maximizes the minimum pairwise distance between guards in tree-shaped polyominoes, i.e., computes optimal solutions.
	Because the shapes constructed in the \NP-hardness reduction are thin as well (but have holes), this result completes the case for thin polyominoes.
\end{abstract}

\section{Introduction}\label{sec:introduction}

How many guards are necessary to guard an art gallery? 
This question was first posed by Victor Klee in 1973 and opened a flourishing field of research in computational geometry; see~for example the book by O'Rourke~\cite{o1987art}, or the surveys by  Shermer~\cite{shermer1992recent}, and Urrutia~\cite{Urrutia00}. 
This question states the classic \textsc{Art~Gallery~Problem} as follows: 
Given a (simple) polygon~$\polygon$ and an integer $k$, decide whether there is a guard set of cardinality $k$ such that every point~$p\in \polygon$ is seen by at least one guard, where a point is seen by a guard if and only if the connecting line segment is inside the polygon.

Suppose the following situation: Your art gallery is the victim of a robbery, or there is a fire outbreak and heavy smoke development in one part of the building.
Because guards in an optimal solution to instances of the classic art gallery problem can be really close together, many cameras can be affected at the same time, see~\cref{fig:introduction}.
From safety and security issues this would be a catastrophic scenario.
We want to address these issues, i.e., for a given shape, we are interested in a guard set that realizes preferably large distances between any two guards of the respective set, rather than focusing on the minimum number of guards needed.
Problems of this kind are called \textsc{Dispersion Problems}, and are typically stated as follows:
Given a set of $n$ objects in the plane and an integer $k$, decide if there is a subset of $k$ such objects, such that the distances between any pair in this subset is at least as large as a given threshold.
We assume that the shortest paths that realize the distances between guards are within the shape, i.e., they do not leave and enter the shape.

In this paper, we introduce the following problem that combines art gallery and dispersion problems and is described as follows.

\begin{description}
	\item[Dispersive Art Gallery Problem] Given a polygon $\polygon$ and a real number $\ell$, decide whether there exists a guard set~$\guardset$ for $\polygon$ such that the pairwise geodesic distances between any two guards in $\guardset$ are at least $\ell$.
\end{description}

Note that in this problem we are not interested in the size of a particular guard set, but only in the distances between guards realized by the guard set.  To the best of our knowledge, this problem has not been considered before. 
Additionally, a first intuitive thought might be that solutions to the classic art gallery problem are also solutions to this variant, since small cardinality guard sets should somehow yield larger pairwise distances. 
However, this is nowhere near the truth, see for example~\cref{fig:introduction} where doubling the size of the guard set results in an arbitrary growth of the dispersion distance.

\begin{figure}[h]
	\centering
	\includegraphics[scale=.75]{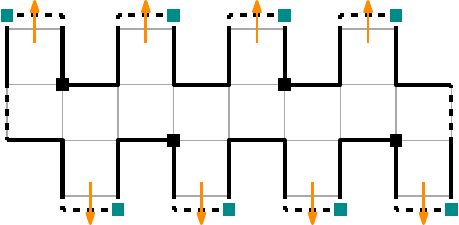}
	\caption{An adaption of the comb-like polyomino. The black vertices realizes an optimal guard set for the classic AGP, while the dark cyan set is optimal for the dispersive AGP.}
	\label{fig:introduction}
\end{figure}

\subsection{Our contributions}

In this paper, we introduce the dispersive art gallery problem and investigate it for vertex guards in polyominoes, i.e., orthogonal polygons whose vertices have integer coordinates. 
Our results are as follows.

\begin{itemize}
	\item We describe a (simple) thin polyomino where the minimum pairwise distance between any two guards in every feasible guard set is at most 3, see~\cref{lem:distance-three-necessary}.
	\item We give a worst-case optimal algorithm for placing a set of guards at the vertices of a simple polyomino such that the pairwise distances between any two guards are at least 3, see~\cref{thm:distance-three-sufficient}.
	\item It is \NP-complete to decide whether a pairwise distance of at least 5 can be guaranteed, see~\cref{thm:dispersion-distance-5-np-hard}.
	\item We describe a dynamic programming approach that computes a guard set that maximizes the minimum pairwise distance between any two guards for tree-shaped polyominoes, see~\cref{thm:optimal_for_thin_polyominos}.
\end{itemize}

\subsection{Previous work}

The famous question from Klee was answered relatively quickly by Chvátal~\cite{chvatal1975combinatorial}. 
Not least because of the beautiful proof from Fisk~\cite{fisk1978short} it is almost common knowledge that $\lfloor\nicefrac{n}{3}\rfloor$ guards are sufficient but sometimes necessary to monitor a simple polygonal region with~$n$~edges.  
Through their typical orthogonality, ``traditional'' galleries actually require less guards, i.e., for orthogonal polygons with $n$ vertices already $\lfloor\nicefrac{n}{4}\rfloor$ guards are sufficient, but also sometimes necessary~\cite{Hoffmann90,kahn1983traditional,o1983alternate}. 
However, finding the optimal solution even in simple polygons is proven to be \NP-hard by Lee and Lin~\cite{LeeL86}, and by Schuchardt and Hecker~\cite{SchuchardtH95} for simple orthogonal polygons.
In the special case of $r$-visibility, computing the minimum guard set is polynomial in orthogonal polygons~\cite{Biedl016,WormanK07}. 
More recently, Abrahamsen et al.~\cite{irrational-guards,agp-exist-r-complete} first showed that irrational guards are sometimes needed in an optimal guard set (in general and orthogonal polygons), and subsequently that the art gallery problem is actually $\exists \mathbb{R}$-complete.

Restricting the class of galleries to polyominoes intuitively makes the problem a lot easier. 
However, as shown by Biedl et al.~\cite{biikm-gp-11,biikm-agtp-12} the problem remains \NP-hard. 
On the positive side they showed that $\lfloor\nicefrac{m+1}{3}\rfloor$ point guards are always sufficient and sometimes necessary, where $m$ is the number of squares of the polyomino. 
Additionally, they give an algorithm for computing optimal guard sets in the case of thin polyomino trees.

By now, there are many variations of the classic art gallery problem. At least in two of them the number of placed guards is irrelevant, as it is also the case in our problem setting. These are the \textsc{Chromatic AGP}~\cite{erickson2010chromatic,EricksonL11,FeketeFHM014,IwamotoI20} where guards are associated by a color and no two guards of the same color class are allowed to have overlapping visibility regions, and the \textsc{Conflict-free chromatic AGP}~\cite{BartschiGMTW14,BartschiS14,hksvw-ccgoag-18} in which the overlapping constraint is relaxed in a way that at every point within the polygon a unique color must be visible. 
In both of these problems, only the number of used colors in a feasible guard set is of interest.

Other variations regard the region that has to be covered, e.g., the \textsc{Terrain Guarding Problem}~\cite{BonnetG19,KingK11}, or problems that arrive from restricting the visibility of the guards to cones of a certain angle, that can be summarized under the generic term of \textsc{Floodlight Problems}~\cite{AbelloESU98,BoseGLOSU97,CzyzowiczRU93,Estivill-CastroOUX95,ItoUY98a,NilssonOPSZ21,SteigerS98}.

Dispersion problems are related to packing problems and involve arranging a set of objects ``far away'' from one another, or choosing a subset of objects that are ``far apart''. 
These naturally arrive as obnoxious facility location problems (see, e.g., the surveys by Cappanera~\cite{cappanera1999survey}, or Erkut and Neuman~\cite{erkut1989analytical}), and as problems of distant representatives~\cite{FialaKP05}.
For more recent work in many different settings, e.g., in disks~\cite{DumitrescuJ12,FialaKP05}, or on intervals~\cite{BiedlLNRS21,LiW18}; see also~\cite{baur2001approximation,BenkertGKOW09,Cabello07,ChandraH01,FeketeM03,FormannW91,JiangBQZ04} for various other settings.

\subsection{Preliminaries}

We consider \emph{polyominoes}, that are orthogonal polygons formed by joining unit squares edge to edge. 
These unit squares are called \emph{cells}, and the edges of the cells are denoted as \emph{sides}.
The \emph{boundary}~$\partial \polygon$ of the polyomino $\polygon$ is the sequence of all cell sides each one lying between one cell from $\polygon$ and one cell not being part of~$\polygon$.
The \emph{vertices} of a polyomino $\polygon$ are the vertices of the boundary of $\polygon$. 
A~point~$p\in \polygon$ \emph{covers} or \emph{sees} another point $q \in \polygon$ if there is an axis-aligned rectangle defined by $p$ and $q$ that is a subset of $\polygon$. 
In the literature this notion of visibility is called \emph{$r$-visibility}. 
The area that is visible from a point $p$ is its \emph{visibility region}~$\visregion(p)$.
The~\emph{distance}~$d(p,q)$ between two points $p,q\in \polygon$ is given by the $L_1$ geodesic shortest path connecting these two points, i.e., the distance is measured entirely within the interior of $\polygon$. 
A~\emph{guard set}~$\guardset$ is a set of points of $\polygon$ such that every point of~$\polygon$ is covered by at least one point of $\guardset$. 
We will restrict ourselves to \emph{vertex guards}, i.e., guards that are placed on vertices of $\polygon$. 
The minimum over all pairwise distances between any two guards in a guard set~$\guardset$ is called its \emph{dispersion distance}.
The \emph{dual graph} of a polyomino $\polygon$ has a vertex for every cell of $\polygon$, and edges between vertices if their corresponding cells share a~side.
We~say that a polyomino is \emph{simple} if it has no holes, \emph{thin} if it does not contain a $2\times 2$ polyomino as a subpolyomino, and \emph{tree-shaped} if its dual graph is a tree.
We call a cell a \emph{niche} if it is a degree $1$ vertex in the dual graph of $\polygon$.

\section{Worst-case optimality}\label{sec:worstcase-optimal}

In this section we prove that a dispersion distance of $3$ is worst-case optimal for simple polyominoes. 
In particular, we construct thin polyominoes for which no guard set can have a larger dispersion distance than 3, and describe an algorithm that computes such guard sets for any simple polyomino.

\begin{lemma}\label{lem:distance-three-necessary}
	There are (simple) thin polyominoes such that every guard set has dispersion distance at most 3.
\end{lemma}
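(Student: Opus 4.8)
The plan is to exhibit an explicit thin polyomino and argue that any vertex guard set covering it must place two guards within $L_1$-geodesic distance at most $3$. The natural candidate is a ``comb''-type or ``plus''/``star''-shaped thin polyomino built from several thin arms meeting at a central cell (or a short corridor with several niches), designed so that each arm ends in a niche whose covering is highly constrained. The key structural observation to isolate first is a \emph{covering lemma for niches}: under $r$-visibility, the reflex-free far corner of a niche cell (the ``tip'' of a dead-end arm) can only be seen from vertices lying on, or extremely close to, that arm — concretely, from vertices within distance at most $3$ of the niche. This is because any axis-aligned rectangle containing the tip and contained in $\polygon$ must fit inside the width-$1$ arm, so its witnessing guard must be essentially collinear with the arm and near its end.

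From there the argument is a pigeonhole on arms. First I would fix the polyomino so that it has enough niches that are \emph{pairwise far} (each pair of niche tips at geodesic distance $>3$, say by making the arms long), yet arranged around a small common region. For each niche, by the covering lemma every feasible guard set must contain a guard within distance $3$ of that niche's tip; call such a guard a ``local guard'' for that niche. If two distinct niches were forced to share a local guard, that guard would be within distance $3$ of both tips, contradicting the large pairwise niche separation — so the local guards are distinct and there is at least one per niche. The crux is then to design the central junction so that the ``small'' region where these local-guard neighborhoods could live is itself of diameter at most $3$: I want the geometry to force all these local guards into a bounded pocket, so that two of them are necessarily within distance $3$ of each other. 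Equivalently — and this is cleaner — I would make the polyomino have exactly the right shape so that \emph{some} niche can only be covered by a guard that is simultaneously within distance $3$ of a vertex that any covering of a neighboring niche must also use; a careful hand-designed figure (a short fat-free gadget, essentially two or three arms of length $2$–$3$ around a shared cell) should make ``every covering uses two vertices within distance $3$'' immediate by a finite case check.

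The key steps, in order: (1) state and prove the niche-covering lemma bounding which vertices can see a dead-end tip, using the width-$1$ constraint on the arm together with $r$-visibility; (2) present the explicit thin polyomino (with a figure), listing its vertices and niches; (3) show every feasible guard set must cover each niche tip, hence contains a vertex from each niche's ``admissible set'' $A_i$, each $A_i$ having diameter $\le 3$ and being localized near niche $i$; (4) by a short combinatorial/geometric case analysis on the (finite) family of admissible sets — exploiting that the arms share the central cell — conclude that any guard set is forced to include two guards whose geodesic distance is at most $3$, i.e.\ dispersion distance $\le 3$. Since the construction uses no $2\times 2$ block and has no holes, it is simple and thin, giving the statement.

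The main obstacle I anticipate is step (4): making the central gadget genuinely \emph{force} two close guards rather than merely allowing them. A single dead-end arm only forces one guard near its tip; one must engineer the meeting region so that the admissible sets overlap or abut in an unavoidable way — for instance, a configuration where one cell must be guarded and the only vertices that can guard it all lie within distance $3$ of a vertex forced by a different niche. Getting this to hold for \emph{every} guard set (not just small or ``natural'' ones) is the delicate part, and I expect the clean solution is to keep the gadget as small as possible so that the verification reduces to enumerating a handful of candidate guard positions and checking pairwise distances directly, with the long arms only serving to rule out far-apart sharing.
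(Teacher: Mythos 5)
Your high-level strategy is the one the paper uses: a small central region with very few admissible guard positions, surrounded by niches each of which forces a second guard close to whichever position is chosen. But as written there is a genuine gap, in two places. First, your ``niche-covering lemma'' --- that the tip of a dead-end arm can only be seen from vertices within distance $3$ of it --- is false for a long straight arm: under $r$-visibility a guard anywhere along a straight width-$1$ corridor sees the niche cell at its end, so the admissible set of a niche can have unbounded diameter. The distance-$3$ bound only holds when the arm is shallow (one or two cells) or bends immediately. This matters because your first formulation of step~(4) asks for \emph{long} arms to make the niche tips pairwise far, and then wants all the forced ``local guards'' squeezed into a central pocket of diameter $3$; with long arms those local guards live near the far tips (or, for straight arms, anywhere along them), not near the junction, so that version of the pigeonhole cannot be made to work. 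Second, and more fundamentally, the lemma is an existence statement, so the explicit polyomino \emph{is} the proof; you never produce it, and you yourself flag the crucial step (``making the central gadget genuinely force two close guards rather than merely allowing them'') as unresolved.

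Your second, ``cleaner'' formulation is exactly how the paper closes this gap (see~\cref{fig:distance-three-necessary}): there is a central region that can only be covered by a guard $g$ placed at one of the four vertices incident to it, and four shallow niches are arranged around it so that, for \emph{each} of these four choices of $g$, the niche nearest to $g$ has all of its admissible covering vertices within geodesic distance $3$ of $g$ and none of them equal to $g$. Note the quantifier structure: one does not need the admissible sets of different niches to overlap or to share a bounded pocket; one only needs, for every admissible position of the central guard, \emph{some} niche whose entire admissible set is close to that position. So the missing ingredient is the concrete gadget realizing this second formulation (a finite check over at most four cases), and the first, long-armed pigeonhole formulation should be dropped.
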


\begin{proof}
	Consider the dark magenta region in~\cref{fig:distance-three-necessary}. 
	Note that this region has to be guarded by a guard $g$ that is placed on one of the four vertices that are incident to this region.
	Let $\Pi$ be the one of the four niches that is closest to~$g$.
	The~guard $g'$ that covers $\Pi$ has distance at most $3$ to $g$.
\end{proof}

\begin{figure}[h]
	\centering
	\includegraphics[page=1, scale=.75]{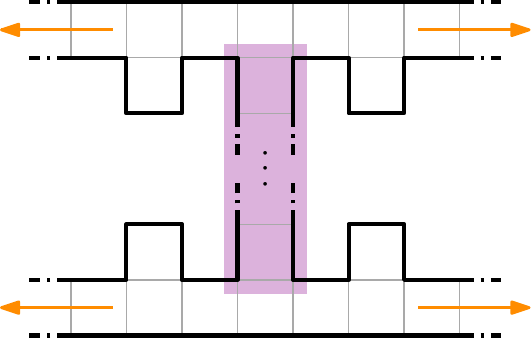}
	\caption{A simple, thin polyomino in that every guard set has dispersion distance at most 3.}
	\label{fig:distance-three-necessary}
\end{figure}

Note that the polyomino depicted in~\cref{fig:distance-three-necessary} can be used as a crucial ``building block'', i.e., it can be extended (as indicated by orange arrows) and therefore be used to construct arbitrarily large polyominoes %(as well as non-simple ones) 
in which the same upper bound holds.

In the remainder of this section we show that at least every \emph{simple} polyomino allows for a guard set with a dispersion distance of at least 3, implying worst-case optimality.

\begin{theorem}\label{thm:distance-three-sufficient}
	For every simple polyomino there exists a guard set that has dispersion distance at least 3.
\end{theorem}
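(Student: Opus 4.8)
The plan is to give a constructive argument that builds a guard set incrementally while maintaining the invariant that all pairwise distances are at least $3$. The natural starting point is a decomposition of the simple polyomino $\polygon$ into manageable pieces whose visibility behavior under $r$-visibility is easy to control. Since $r$-visibility in orthogonal polygons is well understood, I would first partition $\polygon$ into axis-aligned rectangles (for instance by extending all vertical edges until they hit the boundary, yielding a vertical slab decomposition), and observe that guarding each rectangle from one of its corners suffices to cover it. The difficulty is that placing one guard per rectangle naively gives distance $1$ between guards in adjacent slabs, so the core of the proof must be a smarter, more global placement.

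The key idea I would pursue is to work with the dual/adjacency structure of the rectangular pieces and to place guards greedily while ``charging'' newly covered area. Concretely, I would process the polyomino in a sweep (say left to right, or along a spanning tree of the piece-adjacency graph), and each time an uncovered region is encountered, place a guard at a carefully chosen vertex of $\polygon$ on the far side of that region so that (i) it $r$-sees the whole uncovered piece, and (ii) it is at $L_1$-geodesic distance at least $3$ from every previously placed guard. Because $\polygon$ is simple (no holes), geodesic distances behave monotonically along the sweep, which lets me argue that once a guard is ``behind'' the sweep front by at least $3$ it can never again be within distance $3$ of a future guard — so only a bounded local window of previously placed guards needs to be checked at each step. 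The reason distance exactly $3$ is the right target: a guard covering a rectangle of some slab and a guard covering the next relevant uncovered rectangle can always be separated by at least three unit steps of boundary/geodesic path, using the fact that between two conflicting placements there must be enough ``room'' — otherwise the local configuration would force a $2\times 2$-type obstruction or a short niche, which in a simple polyomino we can always route around by choosing a different vertex of the same rectangle.

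The main obstacle I anticipate is handling the places where the polyomino is locally thin — corridors of width one, niches, and T- or X-junctions — because there the freedom to relocate a guard among the corners of its rectangle collapses, and two forced guards might end up at distance $1$ or $2$. I expect the proof to devote most of its effort to a case analysis of these narrow local features: showing that a width-one corridor can be covered by guards spaced $3$ apart along it (analogous to the comb construction in \cref{fig:introduction}), that a niche shares its covering guard with an adjacent wider region so no extra short-distance guard is introduced, and that junctions can be covered by a single well-placed guard at the junction's reflex vertex that simultaneously handles several incident arms. The ``no holes'' hypothesis is essential here and will be invoked to guarantee that the geodesic between any two candidate guard positions is essentially the boundary path, so that distances can be read off combinatorially from the decomposition rather than from genuinely two-dimensional shortest paths.

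An alternative, possibly cleaner route is an inductive one: take a simple polyomino $\polygon$, find a ``pendant'' subregion (a maximal rectangle attached to the rest along a single side, which must exist because the piece-adjacency graph of a simple polyomino has a leaf), remove it, apply induction to obtain a dispersion-$3$ guard set for the smaller polyomino, and then show the removed rectangle is either already covered or can be covered by one new guard placed at distance at least $3$ from all others — again the crux being the local analysis near the attachment. I would present whichever of these two framings makes the narrow-feature case analysis least painful, but in either case the heart of the argument, and the part I expect to be longest and most delicate, is proving that the unavoidable forced guards in thin corridors and at junctions never violate the distance-$3$ invariant.
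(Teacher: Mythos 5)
Your high-level plan (greedily place guards ``behind'' the frontier of the already-covered region and recurse on the uncovered components) is indeed the strategy the paper follows, but as written the proposal has a genuine gap: it never addresses the conflict between the \emph{several new guards} that must be placed in the same step. When you remove the visibility region of one guard, the remainder splits into components $\polygon_1,\dots,\polygon_k$ whose ``gates'' can share an endpoint; each new guard being at geodesic distance at least $1$ behind its own gate only guarantees distance $2$ between two such sibling guards, and your suggestion to ``route around by choosing a different vertex of the same rectangle'' does not show that a consistent choice exists for all components simultaneously. This sibling--sibling coordination is exactly where the paper spends its effort: it assigns each gate an orientation (clockwise or counterclockwise) via a case distinction on $k$, proves that gates sharing an endpoint receive the same orientation (\cref{lem:neighbored_gates}) and are necessarily parallel gates lying orthogonal to one another (\cref{lem:adjacent_gates_parallel_and_orthogonal}), and then shows that pushing both guards in the direction dictated by a common orientation separates them by at least one unit in one coordinate and two in the other (\cref{fig:parallel_gates_orthogonal-e}). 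Without some such mechanism your invariant cannot be maintained, and the margin is zero --- \cref{lem:distance-three-necessary} shows forced pairs at distance exactly $3$ exist --- so no slack-based or ``enough room'' argument can substitute for it.

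Your alternative inductive framing (peel off a pendant rectangle, apply induction, add one guard) has a second, independent problem: the induction hypothesis gives you no control over where the existing guards sit near the attachment side, so the single new guard for the pendant piece may be forced within distance $1$ or $2$ of them, and the rectangle may not even need a new guard at all (it could be partially covered), in which case you would have to re-cover it consistently. Making this work requires strengthening the induction hypothesis (e.g., guaranteeing a guard set avoiding a neighborhood of the cut), which you do not supply; the paper sidesteps this entirely by rooting the recursion at one initial guard and always constraining the newly placed guard relative to its parent and siblings, never retroactively constraining an already-computed solution.
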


We prove \cref{thm:distance-three-sufficient} constructively by giving an algorithm that constructs a guard set with dispersion distance of at least 3 in polynomial time. 
In a nutshell, the algorithm places guards greedily until the whole polyomino is guarded. 
The algorithm starts with a guard on an arbitrary vertex. 
Then the region that is visible from this guard is removed from the polyomino. 
This leads to a set of disjoint subpolyominoes that are guarded recursively, maintaining a distance of at least 3 between any two guards, see~\cref{fig:wco-preliminaries}.

\begin{figure}[h]
	\begin{subfigure}{0.5\columnwidth}
		\centering
		\includegraphics[page=1, scale=.80]{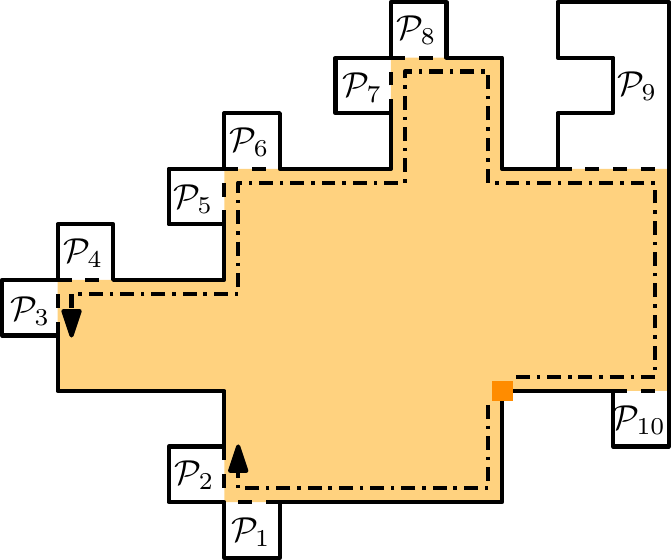}
		\caption{}
		\label{fig:wco-preliminaries-a}
	\end{subfigure}\hfill%
	\begin{subfigure}{0.5\columnwidth}
		\centering
		\includegraphics[page=2, scale=.80]{worst-case-optimality-new.pdf}
		\caption{}
		\label{fig:wco-preliminaries-b}
	\end{subfigure}
	\caption{(a) A guard with its visibility region (orange), and the subpolyominoes $\polygon_1, \dots,\polygon_{10}$. The corresponding gates $G_1$ and $G_2$ are clockwise, and $G_3,\dots,G_{10}$ are counterclockwise. (b) The recursion tree $T$ and a guard set (colored vertices) computed by our algorithm.} 
	\label{fig:wco-preliminaries}
\end{figure}

\subsection{Preliminaries for the algorithm}
Let $\polygon'$ be a subpolyomino of $\polygon$, i.e., $\polygon' \subseteq \polygon$. 
The boundary $\partial \polygon'$ of $\polygon'$ is the union of all sides being part of exactly one cell from $\polygon'$. 
Note that the definition of $\partial \polygon'$ does not depend on~$\polygon$. 
Assume that the guard $g$ cannot see the entire polygon $\polygon$, i.e., $\visregion(g) \neq \polygon$. 
By removing~$\visregion(g)$ from $\polygon$ we obtain $k \geq 1$ subpolyominoes $\polygon_1, \dots, \polygon_k \subset \polygon$, being maximal subsets of unit squares such that each subset forms an orthogonal polygon.
The~\emph{gate}~$G_i$ corresponding to~$\polygon_i$ is $\partial \visregion(g) \cap \partial \polygon_i$. 
Without loss of generality we assume $G_1,\dots,G_k$ to be ordered clockwise on~$\partial \polygon$ starting from~$g$.
The \emph{walls} of a gate~$G_i$ are the two sides from $\partial \polygon \setminus G_i$ being adjacent to $G_i$, see the red segments in~\cref{fig:recursive_calls}. 
Note that the first (second) wall of $G_i$ can lie on~$\partial \polygon_{i-1}$ ($\partial \polygon_{i+1}$) where from now on the indices $i+1$ and $i-1$ are considered modulo~$k$.

\subsection{Description of the algorithm}

Based on the preliminaries above we provide the details of our algorithm. 
As initialization, we consider a guard $g$ placed on an arbitrary vertex of the given polyomino $\polygon$.

\begin{description}
	\item[A recursion step]
	Consider the subshapes $\polygon_1, \dots, \polygon_k$ and the corresponding gates as defined above, see~\cref{fig:wco-preliminaries}(a).
	Let $\alpha$ and $\beta$ be the number of sides from $\partial \polygon$ when walking clockwise along $\partial \polygon$ from $g$ to $G_1$, and from $G_k$ to $g$, respectively. 
	Note that $\alpha,\beta \geq 1$.
	
	In the following we declare each gate to be \emph{(oriented) clockwise} or \emph{counterclockwise}. 
	For~this, we consider different cases regarding $k$, see~\cref{fig:case_distinction}.
	\begin{description}
		\item[(1)] If $k = 1$:
		\begin{description}
			\item[(1.1)] if $\alpha = 1$, we declare $G_1$ to be clockwise
			\item[(1.2)] otherwise, we declare $G_1$ to be counterclockwise.
		\end{description}
		\item[(2)] If $k = 2$:
		\begin{description}
			\item[(2.1)] if $\alpha = 1 = \beta$, we declare $G_1$ to be clockwise and $G_2$ to be counterclockwise,
			\item[(2.2)] if $\alpha = 1 < \beta$, we declare $G_1$ and $G_2$ to be clockwise,
			\item[(2.3)] if $\alpha > 1 = \beta$, we declare $G_1$ and $G_2$ to be counterclockwise.
		\end{description}
		\item[(3)] If $k \geq 3$, let $G_{\ell}$ be the first gate being not adjacent to its successor $G_{\ell+1}$, i.e., $G_{\ell}$ and $G_{\ell+1}$ are not sharing an endpoint. We declare $G_1,\dots,G_{\ell}$ to be clockwise and $G_{\ell+1},\dots,G_{k}$ to be counterclockwise, see~\cref{fig:wco-preliminaries-a}.
	\end{description}
	\begin{figure}[h]
		\centering
		\includegraphics[page=3, scale=0.8]{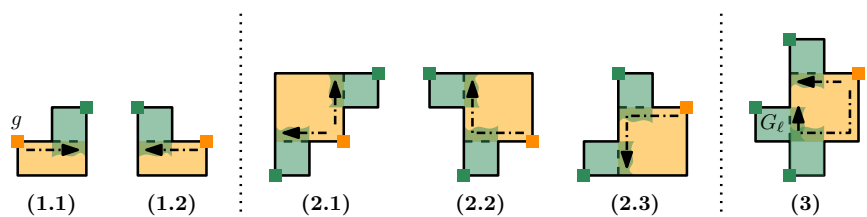}
		\caption{Case distinction for gate orientations (orange: guard $g$, green: guard $\overline{g}$ placed after $g$ and whose position is influenced by orientation of corresponding gate).}
		\label{fig:case_distinction}
	\end{figure}
\end{description} 

For each $\polygon_i$ we make a recursive call for covering $\polygon_i$ separately. In particular, we make the following distinction: A gate is \emph{parallel} (\emph{orthogonal}) when its walls lie parallel (orthogonal) to each other. 

\begin{figure}[h]
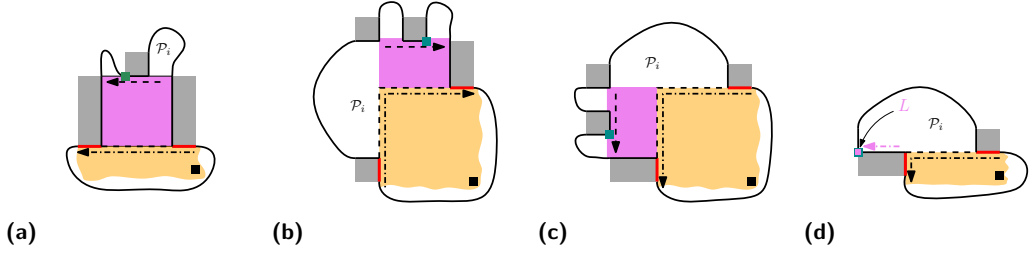

	\begin{subfigure}[b]{0.25\columnwidth}
		\centering
		\includegraphics[page=4, scale=.6]{worst-case-optimality-new.pdf}
		\caption{}
		\label{fig:recursive_calls-a}
	\end{subfigure}\hfill%
	\begin{subfigure}[b]{0.25\columnwidth}
		\centering
		\includegraphics[page=5, scale=.6]{worst-case-optimality-new.pdf}
		\caption{}
		\label{fig:recursive_calls-b}
	\end{subfigure}\hfill%
	\begin{subfigure}[b]{0.25\columnwidth}
		\centering
		\includegraphics[page=6, scale=.6]{worst-case-optimality-new.pdf}
		\caption{}
		\label{fig:recursive_calls-c}
	\end{subfigure}\hfill%
	\begin{subfigure}[b]{0.25\columnwidth}
		\centering
		\includegraphics[page=7, scale=.6]{worst-case-optimality-new.pdf}
		\caption{}
		\label{fig:recursive_calls-d}
	\end{subfigure}\hfill%
	\caption{Different placements of the guard $\overline{g}$ (green) depending on the gate's type and orientation (dashed line): (a)~A~parallel counterclockwise gate. (b) An orthogonal clockwise gate. (c) An orthogonal counterclockwise gate. (d)~An orthogonal counterclockwise gate, and $L$ degenerates to a single point.} 
	\label{fig:recursive_calls}
\end{figure}

\begin{description}
	\item[A recursive call for a parallel gate.] Without loss of generality, we assume that $G_i$ is horizontal and $\visregion(g)$ lies below $G_i$, see~\cref{fig:recursive_calls-a}.
	Let $T \subseteq \mathcal{P}_i$ be the axis aligned rectangle with maximal height and  bottom side $G_i$.
	If $G_i$ is clockwise (counterclockwise), we choose the guard $\overline{g}$ as an arbitrary vertex on the boundary of $T$ not lying on $G_i$ and not lying on the left (right) side of $T$. Note, that $\overline{g}$ always exists because $T$ has maximal height while ensuring $T$ to be contained inside the polyomino.
	Finally, we recurse on $\polygon:= \polygon_i \cup \visregion(\overline{g})$ and $g := \overline{g}$.

\item[A recursive call for an orthogonal gate.] 
	Without loss of generality, we assume that $\polygon_i$ lies above and to the left of~$G_i$, see~\cref{fig:recursive_calls-b,fig:recursive_calls-c}. 
	We distinguish two cases.
	\begin{description}
		\item[(1)] $G_i$ is counterclockwise: 
		Let $\ell \subseteq G_i$ be the vertical segment of $G_i$. 
		Note that if $G_i$ only consists of a horizontal segment, $\ell$ denotes the left endpoint of $G_i$, see~\cref{fig:recursive_calls-d}. 
		Let $L \subseteq \mathcal{P}_i$ be maximal rectangle with right side $\ell$, see~\cref{fig:recursive_calls-c,fig:recursive_calls-d}. 
		We choose $\overline{g}$ as a vertex from the boundary of $L$ not lying on $G_i$ %and not lying on the right side of $L$.
		but from the left side of $L$.
		\item[(2)] $G_i$ is clockwise: 
		Consider by $t \subset G_i$ the horizontal segment of $G_i$. 
		%Note that $t$ denotes the top endpoint of $G_i$ if it consists only of a vertical segment.
		Note that if $G_i$ only consists of a vertical segment, $t$ denotes the top endpoint of $G_i$.
		Let $T \subseteq \mathcal{P}_i$ be the maximal rectangle with bottom side $t$, see~\cref{fig:recursive_calls-b}. 
		We choose $\overline{g}$ as a vertex from the boundary of $T$ not lying on $G_i$ but from the top side of~$T$.
	\end{description}
	Note that $\overline{g}$ always exists because $T$ has maximal height in the first case and a maximal width in the second case while ensuring $T$ to be contained inside the polyomino. Finally, we again recurse on $\polygon:= \polygon_i \cup \visregion(\overline{g})$ and $g:= \overline{g}$.
	Intuitively speaking, considering the orientation used for a previously placed guard ensures its distance to be at least $3$ to the next placed guard; this is shown in~\cref{lem:dispersive_dist_three}, where the first and second case are special cases.
\end{description}

\subsection{Analysis of the algorithm}

We consider the recursion tree $T$ of our algorithm. 
In particular, each guard placed in the corresponding recursion step is a node in $T$. 
An edge between a father node $g$ and a child node $\overline{g}$ exists if $g$ creates a subpolyomino $\polygon_i$ causing a recursive call on $\polygon_i \cup \visregion(\overline{g})$ and $\overline{g}$. We~say that $g_2$ is a \emph{descendent} of $g_1$ if there is a sequence of nodes $g_1 = \overline{g}_1, \dots, \overline{g}_{\ell} = g_2$ such that $\overline{g}_i$ is the father of $\overline{g}_{i+1}$ for $i = 1,\dots, \ell-1$. 

For a clearer presentation, we say that $g_1$ is a \emph{child} of $g_2$, if $\ell = 1$.
As $\overline{g}$ is chosen from the segment resulting from pushing a vertical or horizontal line of $G_i$ until a vertex of $\polygon$ is hit for the first time, we obtain the following:

\begin{observation}\label{obs:covering_cells_adjacent_to_gate}
	All cells from $\polygon_i$ that share at least a point or a side with $G_i$ are seen by~$\overline{g}$. The corresponding cells are shown as dark green regions in~\cref{fig:placed_at_least_one_step_behind}.
\end{observation}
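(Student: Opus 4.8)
\textit{Proof proposal.}
The only ingredient I would use is the elementary fact that \emph{a point on the boundary of an axis-parallel rectangle $R\subseteq\polygon$ $r$-sees all of $R$} (the bounding box of such a point and any point of $R$ is contained in $R$). Recall that, by construction, $\overline{g}$ is the vertex of $\polygon$ hit when a horizontal or vertical line carrying a segment of $G_i$ is pushed into $\polygon_i$; equivalently, $\overline{g}$ lies on the side opposite to $G_i$ of a maximal axis-parallel rectangle $R\subseteq\polygon_i$ one side of which is (a segment of) $G_i$ — this is $T$ in the parallel and the orthogonal clockwise case, and $L$ in the orthogonal counterclockwise case. The plan is to walk through the four cases of a recursive call and, in each, certify that every cell of $\polygon_i$ incident to $G_i$ is $r$-seen by $\overline{g}$.

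For a parallel gate this is immediate. Here $G_i$ is a single segment, $R$ has $G_i$ as one side, and $R$ has width at least one in the perpendicular direction because the row of cells of $\polygon_i$ across $G_i$ already belongs to $\polygon_i$; hence those cells form the first row of $R$ and are seen by $\overline{g}\in\partial R$. A cell meeting $G_i$ in only an endpoint would have to lie beyond one of the two (parallel) walls of $G_i$ and thus outside $\polygon_i$, so nothing further has to be checked. The same argument settles, in the orthogonal cases, every cell incident to the arm of $G_i$ on which $R$ is built: such cells are the first row/column of $R$.

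What remains — and what I expect to be the only real work — is, for an orthogonal gate, the cells incident to the \emph{other} arm of $G_i$, together with the unique cell incident to $G_i$ only at its bend, and the degenerate configurations of \cref{fig:recursive_calls-d} in which an arm shrinks to a point; these cells form a unit-width strip running along the non-swept arm. For such a cell $c$ I would decompose the bounding box $B$ of $\overline{g}$ and $c$ into (i) the part of $B$ inside $R$, (ii) the part of $B$ running along the non-swept arm, which lies in $\polygon_i$, and (iii) the part of $B$ on the $\visregion(g)$-side of the bend, which lies in $\visregion(g)$ because $\visregion(g)$ is rectilinearly convex (its intersection with every axis-parallel line is connected) and therefore contains the whole box-corner there; then $B\subseteq\polygon$ and $\overline{g}$ sees $c$. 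The main obstacle is making (i)--(iii) precise: one has to use that $R$ has \emph{maximal} height/width, and that $\overline{g}$ may be chosen as the vertex on $R$'s far side closest to the bend of $G_i$, so that $B$ does not overshoot $R$ in the sweep direction; the degenerate cases are then disposed of by the same bounding-box check. Combined with the parallel case, this gives the observation for every recursive call.
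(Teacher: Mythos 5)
Your parallel case and, in the orthogonal cases, the cells incident to the arm of $G_i$ on which the swept rectangle $R$ is built, are fine: those cells lie in $R$, $\overline{g}\in\partial R$, and the bounding-box fact you quote closes the argument. This is also all the justification the paper itself offers (one sentence before the observation plus the figures), so up to this point you are at least as careful as the source.

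The gap is in the step you yourself flag as ``the only real work'', and it is not just a matter of making (i)--(iii) precise: the three regions do not cover the bounding box $B$. Take the counterclockwise orthogonal case in the paper's normalization: vertical arm $\ell$ with the bend $p$ at its top, horizontal arm $t$ going right from $p$, and $L$ the maximal rectangle with right side $\ell$ and width $w$. Since $\overline{g}$ sits on the \emph{left} side of $L$, the bounding box of $\overline{g}$ and any cell lying above $t$ contains the entire row of $w$ unit cells directly above the top edge of $L$. That row is not in $L$ (it is above it), not in $\visregion(g)$ (it is left of $\ell$), and not ``along the non-swept arm'' (only its rightmost cell touches $p$; $t$ starts at $p$ and runs rightward). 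Simplicity of $\polygon$ forces only that single corner cell at $p$ into $\polygon$; for $w\ge 2$ the remaining $w-1$ cells of the row need not belong to $\polygon$ at all — e.g.\ when $\polygon_i$ consists of a long horizontal corridor attached to $\ell$ plus a short hook of cells above $t$. In such a configuration no vertex on the left side of $L$ sees the cells above $t$, so your decomposition cannot be completed. Your proposed remedy — choosing $\overline{g}$ on the far side of $R$ closest to the bend — does not help, because $B$ still spans the full width $w$ of $R$ at the level of the non-swept arm. Closing this case requires either structural facts about $\polygon_i$ beyond what the paper establishes (rectilinear convexity of $\visregion(g)$ and maximality of $R$ do not suffice) or a different placement rule for $\overline{g}$; as it stands, the hard half of the observation remains unproved in your write-up, and the easy half is the only part that goes through.
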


\begin{figure}[h]
	\begin{subfigure}[b]{0.25\columnwidth}
		\centering
		\includegraphics[page=18, scale=.6]{worst-case-optimality-new.pdf}
		\caption{}
		\label{fig:placed_at_least_one_step_behind-a}
	\end{subfigure}\hfill%
	\begin{subfigure}[b]{0.25\columnwidth}
		\centering
		\includegraphics[page=19, scale=.6]{worst-case-optimality-new.pdf}
		\caption{}
		\label{fig:placed_at_least_one_step_behind-b}
	\end{subfigure}\hfill%
	\begin{subfigure}[b]{0.25\columnwidth}
		\centering
		\includegraphics[page=20, scale=.6]{worst-case-optimality-new.pdf}
		\caption{}
		\label{fig:placed_at_least_one_step_behind-c}
	\end{subfigure}\hfill%
	\begin{subfigure}[b]{0.25\columnwidth}
		\centering
		\includegraphics[page=21, scale=.6]{worst-case-optimality-new.pdf}
		\caption{}
		\label{fig:placed_at_least_one_step_behind-d}
	\end{subfigure}\hfill%
	\caption{All cells from $\polygon_i$ that share at least a point or a side with $G_i$ are seen by $\overline{g}$.}
	\label{fig:placed_at_least_one_step_behind}
\end{figure}

As our algorithm recurses on $\polygon_i \cup \visregion(\overline{g})$, we obtain the following as a direct consequence of~\cref{obs:covering_cells_adjacent_to_gate}.

\begin{corollary}\label{lem:behind_gate}
	For each recursive call on $\polygon_i \cup \visregion(\overline{g})$ and $\overline{g}$ the guard $\overline{g}$ is placed inside $\polygon_i$ within a distance of at least $1$ to the corresponding gate $G_i$.
\end{corollary}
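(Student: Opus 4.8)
The plan is to unwind the definition of $\overline g$ in each case of the recursive call and to read off the two assertions directly; both are essentially restatements of how $\overline g$ was chosen.

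\emph{Membership in $\polygon_i$.} In every case $\overline g$ is defined to be a vertex of $\polygon$ lying on the boundary of an axis-aligned rectangle that the algorithm takes maximal inside $\polygon_i$ and one of whose sides lies on (a sub-segment of) the gate $G_i$: the rectangle $T$ for a parallel gate and for an orthogonal clockwise gate, and the rectangle $L$ for an orthogonal counterclockwise gate, including the degenerate configuration of \cref{fig:recursive_calls-d}. Since this rectangle is contained in $\polygon_i$ and $\overline g$ lies on its boundary, $\overline g \in \polygon_i$.

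\emph{Distance at least $1$ from $G_i$.} Each unit side of $G_i$ separates a cell of $\visregion(g)$ from a cell of $\polygon_i$, so the cells of $\polygon_i$ incident to $G_i$ --- precisely the cells that \cref{obs:covering_cells_adjacent_to_gate} certifies to be covered by $\overline g$ --- form a strip of width one inside $\polygon_i$ along $G_i$. Hence the rectangle from which $\overline g$ is chosen has extent at least $1$ in the direction perpendicular to its side on $G_i$. Now, in each case $\overline g$ is explicitly required to lie off $G_i$ (and on the far side of the rectangle, resp.\ off the one designated wall), and $\overline g$ is a vertex of $\polygon$ and therefore has integer coordinates. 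Consequently the coordinate of $\overline g$ perpendicular to the $G_i$-side of the rectangle differs from that coordinate of every point of that side by a nonzero integer, hence by at least $1$; since the $L_1$-length of any path between two points is at least the difference of any single one of their coordinates, $d(\overline g, q) \ge 1$ for every point $q$ on that side. Applying the same reasoning to the second segment of $G_i$ in the orthogonal case (distinguishing the few sub-cases according to whether the vertical segment of the gate runs into $\polygon_i$ or into $\visregion(g)$) yields $d(\overline g, G_i) \ge 1$.

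I expect the only genuine work to be this last piece of bookkeeping for orthogonal gates: verifying the perpendicular-coordinate gap simultaneously for the horizontal segment $t$ and the vertical segment $\ell$ of $G_i$, and separately handling the degenerate case of \cref{fig:recursive_calls-d}, where $L$ collapses so that $\ell$ is a single point and one must confirm that the rectangle actually used has width at least $1$ and that $\overline g$, chosen from its left side, is still at integer distance at least $1$ from all of $G_i$. Everything else follows immediately from the construction and from \cref{obs:covering_cells_adjacent_to_gate}.
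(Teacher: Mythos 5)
Your proof is correct and follows essentially the same route as the paper, which treats the corollary as an immediate consequence of the construction of $\overline{g}$ together with the observation that the cells of $\polygon_i$ incident to $G_i$ are covered; you simply make explicit the same unwinding of the rectangle choice ($T$ resp.\ $L$), the unit-width strip along the gate, and the integrality of vertex coordinates. The extra bookkeeping you flag for the two segments of an orthogonal gate and the degenerate case is exactly the detail the paper leaves implicit, and your handling of it is sound.
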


\begin{restatable}{lemma}{neighboredGates}\label{lem:neighbored_gates}
	Let $G_1$ and $G_2$ be two gates created in the same recursion step. If $G_1$ and $G_2$ share an endpoint, they have the same orientation.
\end{restatable}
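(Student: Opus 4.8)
The plan is to prove \cref{lem:neighbored_gates} by a direct case analysis driven by the case distinction $(1)$--$(3)$ that defines the gate orientations. First I would observe that two gates $G_1$ and $G_2$ created in the same recursion step can only share an endpoint if they are consecutive in the clockwise order around $\partial\polygon$, say $G_j$ and $G_{j+1}$ (indices mod $k$); two non-consecutive gates live on disjoint pieces of the boundary and cannot touch. So the statement reduces to: whenever consecutive gates $G_j, G_{j+1}$ share an endpoint, the orientation assigned to $G_j$ equals the one assigned to $G_{j+1}$. The contrapositive form is cleaner to check: if $G_j$ and $G_{j+1}$ receive different orientations, they do not share an endpoint.

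Next I would walk through the three cases of the orientation rule. For $k\ge 3$ (case $(3)$), the only index where the orientation flips from clockwise to counterclockwise is $\ell$, since $G_1,\dots,G_\ell$ are clockwise and $G_{\ell+1},\dots,G_k$ are counterclockwise; but $\ell$ was chosen precisely as the \emph{first} gate that is not adjacent to its successor, so $G_\ell$ and $G_{\ell+1}$ do not share an endpoint, which is exactly what we need. (One should also note the wrap-around pair $G_k, G_1$: $G_k$ is counterclockwise and $G_1$ is clockwise, so they differ; here I would argue they cannot share an endpoint because the clockwise walk from $G_k$ back to $g$ has $\beta\ge 1$ boundary sides and then from $g$ to $G_1$ has $\alpha\ge 1$ more, so $g$ and at least one boundary side separate them.) For $k=2$, subcases $(2.2)$ and $(2.3)$ give both gates the same orientation, so nothing to show; only $(2.1)$ with $\alpha=1=\beta$ assigns $G_1$ clockwise and $G_2$ counterclockwise, and here $\alpha=\beta=1$ forces $g$ to sit between $G_2$ and $G_1$ on the boundary with exactly one side on each side of $g$, so again they are separated by $g$ and cannot share an endpoint. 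For $k=1$ there is only one gate and the claim is vacuous (and the pair $G_1,G_1$ is not two distinct gates).

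The main obstacle I anticipate is making the geometric claim ``consecutive gates that receive the same orientation-defining status actually do share an endpoint only in the cases we claim, and the flip-points are genuinely separated'' fully rigorous, i.e., tying the combinatorial index $\ell$ and the counters $\alpha,\beta$ to the actual incidence structure of $\partial\visregion(g)$ and $\partial\polygon$. Concretely, I need the fact that $G_1,\dots,G_k$ are exactly the connected components of $\partial\polygon$ cut out by removing $\visregion(g)$, listed in clockwise order, together with the observation that two such components share an endpoint iff that endpoint is a reflex-type vertex of $\partial\visregion(g)$ lying between them with no intervening boundary side of $\partial\polygon$; this is the content that $\alpha,\beta$ and the ``first non-adjacent'' choice of $\ell$ encode. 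Once that correspondence is spelled out, each of the finitely many subcases is immediate. I would keep the figure references (\cref{fig:case_distinction}, \cref{fig:wco-preliminaries-a}) to let the reader verify the adjacency pattern visually, and present the argument as a short enumeration over $k=1$, $k=2$ (three subcases), and $k\ge 3$, handling the wrap-around pair $G_k,G_1$ explicitly in the last two.
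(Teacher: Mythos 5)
Your proposal is correct and follows essentially the same route as the paper: an enumeration over the orientation-assignment cases, noting that (2.2)/(2.3) are trivial, that in (2.1) the gates are separated along the boundary and hence cannot share an endpoint, and that for $k\ge 3$ the only orientation flip occurs at the pair $G_\ell,G_{\ell+1}$, which by the choice of $\ell$ does not share an endpoint. Your explicit treatment of the wrap-around pair $G_k,G_1$ is a minor addition in rigor that the paper leaves implicit, but the argument is the same.
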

\begin{proof}
	The proof follows the case distinction of the recursion step, and let $k$ be the number of gates created in the considered recursion step.	 
	As $k \geq 2$, Case (1) is not relevant. 
	If~$k=2$, the Cases (2.2) and (2.3) directly imply the same orientation of $G_1$ and $G_2$. 
	In Case (2.1) both gates are connected via one side of the boundary of $\polygon$, i.e., $\gamma_1 = \gamma_2$. 
	Thus, $G_1$ and $G_2$ cannot share an endpoint, see~\cref{fig:case_distinction}~(2.1). 
	Finally, let $k\geq 3$ gates be created during the considered recursion step. 
	If $G_1$ and $G_2$ share an endpoint, the description of the case ensures that they are oriented in the same direction. 
\end{proof}

We now consider the geometric form of gates and the positions of two gates relative to one another. 
Each gate contains either a single, or two segments. 
In the latter case, these lie adjacent and orthogonal. 
Note that while an orthogonal gate can consist of a single segment, see \cref{fig:recursive_calls-d}, a parallel gate cannot consist of two segments. 
For each segment $s$ of a gate, consider the maximal segment $S \subseteq s$ inside $\polygon$ dividing $\polygon$ into two polyominoes $\polygon_{\text{left}},\polygon_{\text{right}} \subset \polygon$ where $\polygon_{\text{left}}$ ($\polygon_{\text{right}}$) lies to the left (right) of $S$ in clockwise order. 
We say that the guard $g \in \polygon$ \emph{lies to the right} of $s$ because $g \in \polygon_{\text{right}}$, see~\cref{fig:parallel_gates_orthogonal-a}.

\begin{restatable}{lemma}{gatesRecursionShareEndpoint}\label{lem:adjacent_gates_parallel_and_orthogonal}
	If two gates $G_1, G_2$ created in the same recursion step share an endpoint, both $G_1$ and $G_2$ are parallel gates lying orthogonal to one another.
\end{restatable}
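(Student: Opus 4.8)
The plan is to combine Lemma~\ref{lem:neighbored_gates} with a local case analysis around the shared endpoint. By Lemma~\ref{lem:neighbored_gates}, $G_1$ and $G_2$ have the same orientation, so without loss of generality assume both are clockwise; let $v$ be the common endpoint. The key structural fact to exploit is that $G_1$ and $G_2$ are consecutive on $\partial\visregion(g)$ (they are created in the same recursion step and share a point, so no other gate separates them), hence $v$ is a reflex vertex of $\polygon$ as seen from $g$: it is a corner where $\partial\visregion(g)$ "turns the corner" away from $g$. First I would argue that at such a vertex the two sides of $\partial\polygon$ emanating from $v$ that belong to $G_1$ and $G_2$ respectively must be \emph{orthogonal} to one another — if they were collinear, $G_1$ and $G_2$ would merge into a single segment of one gate (or $v$ would not be a vertex at all), and if they pointed in opposite directions along the same line, $\visregion(g)$ would be pinched to zero width at $v$, contradicting that $g$ sees a full-dimensional neighborhood there. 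So the two gates meet at a right angle at $v$.

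Next I would rule out that either $G_i$ is an \emph{orthogonal} gate, i.e., that either one individually consists of two perpendicular segments meeting at a corner. Here I would use the case distinction of the recursion step together with the definition of the walls: a gate $G_i$ that bends has its bend at a reflex vertex of $\polygon_i$, and I would show that such a bending gate cannot be adjacent (share an endpoint) with the \emph{next} gate in clockwise order, because the wall of $G_i$ on the side toward $G_{i+1}$ would then have to be a single side of $\partial\polygon$ on which both $G_i$ and $G_{i+1}$ abut, forcing them to be separated by at least that wall rather than sharing a vertex. Concretely, walking clockwise along $\partial\polygon$ from the far end of $G_1$ to the near end of $G_2$, the only way to arrive at a shared endpoint with no intervening boundary side is for both gates to terminate exactly at the same vertex of $\partial\visregion(g)$, and a bending (orthogonal) gate spends one of its two segments "using up" the turn, which then cannot also serve as the junction with the neighbor. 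This shows each of $G_1,G_2$ is a single segment, i.e., a \emph{parallel} gate in the paper's terminology (a gate with a single segment has its two walls parallel). Combined with the previous paragraph, the two single segments meet orthogonally at $v$, which is exactly the claim.

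The main obstacle I anticipate is making the middle step fully rigorous: precisely characterizing, in terms of the recursion's case distinction and the walls, when two consecutive gates share an endpoint versus are joined by a boundary side (the situation $\gamma_1=\gamma_2$ that already appeared in the proof of Lemma~\ref{lem:neighbored_gates}), and verifying in each branch of cases (2) and (3) that a bending gate never lands in the "shares an endpoint" situation with its successor. I would handle this by a small figure-driven enumeration of the local picture at $\partial\visregion(g)$ near $v$: up to rotation/reflection there are only a couple of ways two consecutive pieces of $\partial\visregion(g)$ can abut at a reflex corner of $\polygon$, and in each the incident cells of $\polygon_1$ and $\polygon_2$ are forced, from which both the orthogonality of the two gates and the single-segment property of each follow directly.
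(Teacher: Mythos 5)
Your first step (the two gate segments meeting at the shared endpoint cannot be collinear, hence meet at a right angle) is essentially sound and matches the paper's observation that two adjacent segments from different gates cannot be collinear, since otherwise a side of $\partial\polygon$ would have to lie between two cells of $\polygon$. The detour through Lemma~\ref{lem:neighbored_gates} is harmless but unnecessary: the clockwise/counterclockwise declaration of a gate plays no role in this purely geometric statement.

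The middle step is where the proposal has a genuine gap, and it is the heart of the lemma. Your argument that a ``bending'' gate cannot share an endpoint with its successor rests on the assertion that the wall of $G_i$ toward $G_{i+1}$ ``would then have to'' separate the two gates --- but that is precisely what needs to be proved, and no reason is given for it. Moreover, your fallback of a local, figure-driven enumeration around the shared endpoint $v$ cannot work in principle: if $G_1$ consists of two segments, its bend lies at the junction of those two segments, away from $v$, so the obstruction is not visible in any local picture at $v$. The paper's argument is global and different: assuming $G_1$ is orthogonal and adjacent to $G_2$, one obtains three consecutive, pairwise-adjacent, mutually non-collinear segments $s_1,s_2,s_3$ of $\partial\visregion(g)$ with $g$ lying to the right of each of them; by $r$-visibility this forces $g$ to see a cell outside $\visregion(g)$, a contradiction. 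That three-segment visibility contradiction is the missing idea. Finally, note that ``single segment'' and ``parallel gate'' are not synonymous in the paper's terminology --- an orthogonal gate may consist of a single segment (the walls, not the segments, determine the type, cf.\ \cref{fig:recursive_calls-d}) --- so even a completed proof that each gate is one segment would not by itself yield the stated conclusion.
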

\begin{proof}
	The proof is by contradiction. 
	Assume that at least $G_1$ is orthogonal. 
	Two adjacent segments from different gates cannot be collinear, because otherwise there is a side from the boundary of $\polygon$ lying between cells from the polygon, see~\cref{fig:parallel_gates_orthogonal-b}. 
	As $G_1$ is orthogonal and adjacent to $G_2$, there is a sequence of consecutive segments $s_1,s_2,s_3$ from these gates, where $s_1,s_2$ and $s_2,s_3$ are adjacent to one another, see~\cref{fig:parallel_gates_orthogonal-c}. 
	As the guard $g$ lies to the right of $s_1$, $s_2$, and $s_3$, it sees at least one cell outside of $\visregion(g)$ being a contradiction.
\end{proof}

\begin{figure}[h]
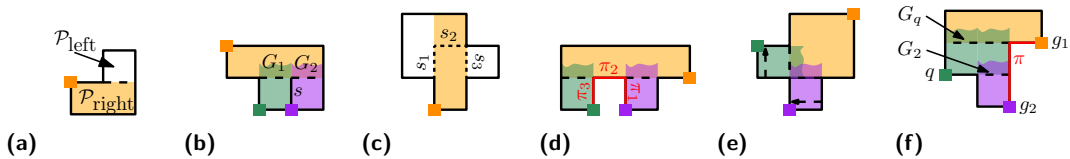

	\begin{subfigure}[b]{0.165\columnwidth}
		\centering
		\includegraphics[page=12, scale=.75]{worst-case-optimality-new.pdf}
		\caption{}
		\label{fig:parallel_gates_orthogonal-a}
	\end{subfigure}\hfill%
	\begin{subfigure}[b]{0.165\columnwidth}
		\centering
		\includegraphics[page=13, scale=.75]{worst-case-optimality-new.pdf}
		\caption{}
		\label{fig:parallel_gates_orthogonal-b}
	\end{subfigure}\hfill%
	\begin{subfigure}[b]{0.165\columnwidth}
		\centering
		\includegraphics[page=14, scale=.75]{worst-case-optimality-new.pdf}
		\caption{}
		\label{fig:parallel_gates_orthogonal-c}
	\end{subfigure}\hfill%
	\begin{subfigure}[b]{0.165\columnwidth}
		\centering
		\includegraphics[page=15, scale=.75]{worst-case-optimality-new.pdf}
		\caption{}
		\label{fig:parallel_gates_orthogonal-d}
	\end{subfigure}\hfill%
	\begin{subfigure}[b]{0.165\columnwidth}
		\centering
		\includegraphics[page=16, scale=.75]{worst-case-optimality-new.pdf}
		\caption{}
		\label{fig:parallel_gates_orthogonal-e}
\end{subfigure}\hfill%
	\begin{subfigure}[b]{0.165\columnwidth}
		\centering
		\includegraphics[page=17, scale=.75]{worst-case-optimality-new.pdf}
		\caption{}
		\label{fig:parallel_gates_orthogonal-f}
\end{subfigure}\hfill%
	\caption{(a) A segment of a gate separates $\polygon$ into a left and a right polyomino. (b)~Two parallel gates lying collinear are not possible. (c)~Two gates laying adjacent where at least one is an orthogonal gate is not possible. (d)~A shortest path connecting two guards not being %(indirect) 
	children or descendents of each other where the corresponding gates are not adjacent. (e)~Two guards not being %(indirect) children 
	children or descendents of each other where the corresponding gates are adjacent. (f)~Sequence of children.}
\label{fig:parallel_gates_orthogonal}
\end{figure}

We now analyze the dispersion distance of the guard set constructed by our approach based on \cref{lem:behind_gate,lem:neighbored_gates,lem:adjacent_gates_parallel_and_orthogonal}.

\begin{restatable}{lemma}{dispersiveDistanceThreeLowerBound}\label{lem:dispersive_dist_three}
	The constructed guard set has a dispersion distance of at least 3.
\end{restatable}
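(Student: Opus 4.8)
The plan is to bound the distance $d(g_1,g_2)$ from below by $3$ for every pair of distinct guards $g_1,g_2$ in the constructed set, organized by how $g_1$ and $g_2$ sit in the recursion tree $T$. I would first observe that it suffices to analyze pairs that are ``close'' in $T$: if $g_2$ is a descendent of $g_1$ but not a child, then by \cref{lem:behind_gate} the intermediate guard on the path is placed at distance $\ge 1$ behind its gate, and a short induction on path length using that each recursion step strictly shrinks the polyomino and pushes guards across a fresh gate gives $d(g_1,g_2)\ge 3$ a fortiori. Likewise, two guards lying in different subpolyominoes $\polygon_i,\polygon_j$ with $|i-j|\ge 2$ (non-adjacent gates) are separated by all of $\visregion(g)$ plus two ``behind-gate'' offsets, so their geodesic has length $\ge 3$ — this is the situation sketched in \cref{fig:parallel_gates_orthogonal-d}. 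So the real content is the two remaining local cases: (i) $\overline g$ is a child of $g$, i.e. $\overline g$ was placed by a recursive call on a gate $G_i$ created by $g$; and (ii) $g_1\in\polygon_i$, $g_2\in\polygon_{i+1}$ where $G_i,G_{i+1}$ share an endpoint (adjacent gates), the case of \cref{fig:parallel_gates_orthogonal-e}.

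For the parent–child case (i), I would go through the placement rules of the recursion step. By \cref{obs:covering_cells_adjacent_to_gate} every cell of $\polygon_i$ touching $G_i$ is already seen by $\overline g$, and by \cref{lem:behind_gate} $\overline g$ lies at distance $\ge 1$ from $G_i$ inside $\polygon_i$; since $g\notin\polygon_i$, the geodesic from $g$ to $\overline g$ must cross $G_i$, contributing that offset of $\ge 1$. The remaining $\ge 2$ comes from the geometry on $g$'s side together with the way the gate orientation was chosen: the whole point of declaring $G_i$ clockwise or counterclockwise (Cases (1)–(3) of the recursion step) and of choosing $\overline g$ on the far side of the maximal rectangle $T$ (resp. $L$) is to force $\overline g$ into the half-plane of $G_i$ that is ``away'' from $g$, so that $g$'s own distance to $G_i$ plus the $L_1$ detour around the reflex corner at a wall of $G_i$ already sums to $\ge 2$. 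Here I would use \cref{lem:adjacent_gates_parallel_and_orthogonal} to know that when two gates created by $g$ share an endpoint they are parallel gates orthogonal to each other, which pins down the local picture at each wall, and \cref{lem:neighbored_gates} to know adjacent gates share an orientation so the choice of $\overline g$ is consistent along a chain of adjacent gates (the ``sequence of children'' of \cref{fig:parallel_gates_orthogonal-f}). The argument is a finite case check over parallel/orthogonal $\times$ clockwise/counterclockwise, reading off coordinates from \cref{fig:recursive_calls-a,fig:recursive_calls-b,fig:recursive_calls-c,fig:recursive_calls-d}, including the degenerate case where $L$ is a single point.

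For the sibling case (ii), $g_1$ and $g_2$ are both placed ``behind'' their respective gates $G_i$ and $G_{i+1}$; the geodesic between them runs through $\visregion(g)$ near the shared endpoint $v$ of the two gates. I would argue $d(g_1,v)\ge 1$ and $d(v,g_2)\ge 1$ from \cref{lem:behind_gate}, and then extract the missing $+1$ from the fact (via \cref{lem:adjacent_gates_parallel_and_orthogonal}) that the two gates are orthogonal to one another, so turning the corner at $v$ in the $L_1$ metric forces the path out of a $1\times 1$ neighbourhood of $v$ in one more coordinate direction; alternatively, one of $g_1,g_2$ is placed on the far side of its maximal rectangle (again by the orientation choice, which is the same for both by \cref{lem:neighbored_gates}), giving an extra unit. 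Finally I would handle the boundary book-keeping: pairs where one guard is the root $g$ and one is two or more levels down reduce to case (i) composed with the descendent observation. The main obstacle I anticipate is not any single inequality but making the parent–child case (i) airtight across \emph{all} the sub-cases of the recursion step simultaneously — in particular verifying that the seemingly ad hoc thresholds $\alpha=1$ vs. $\alpha>1$ (and the symmetric $\beta$ conditions) in Cases (1.1)–(2.3) are exactly what is needed so that $g$'s distance to each of its gates, combined with the forced side of $\overline g$, never drops below $2$; this is where a careless figure-reading could hide an off-by-one, so I would be explicit about where the guard $g$ sits relative to $G_1$ and $G_k$ (the quantities $\alpha,\beta$) and check the worst configuration in each case.
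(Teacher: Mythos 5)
Your proposal follows essentially the same route as the paper: the same case split by the recursion-tree relationship of the two guards (siblings with non-adjacent gates, siblings with adjacent gates, descendent-but-not-child, and parent--child), the same use of \cref{lem:behind_gate}, \cref{lem:neighbored_gates}, and \cref{lem:adjacent_gates_parallel_and_orthogonal}, and the same finite case check over gate type and orientation for the parent--child case. The only place you are looser than the paper is the descendent-but-not-child case, where the paper explicitly decomposes the geodesic into three unit-length subpaths (across the intermediate guard's visibility region and behind the two gates) rather than appealing to an induction ``a fortiori'', but the underlying argument is the same.
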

\begin{proof}
In order to prove the lemma we consider an arbitrary pair of placed guards~$g_1,g_2$~and distinguish three cases: 
(1) Neither $g_1$ is a descendent of $g_2$ nor vice versa. (2) $g_1$ is a descendent but not a child of $g_2$ or vice versa. (3)~$g_1$ is a child of $g_2$ or vice versa. In the following, we consider all three cases separately. The intuitions for the three cases are the following: (1) If $g_1$ and $g_2$ have the common father $g$ let $G_1,G_2$ be the corresponding gates. If $G_1,G_2$ are not adjacent these gates are within a distance of at least $1$. Hence, applying~\cref{lem:behind_gate} twice leads to a distance of at least~$3$, see~\cref{fig:parallel_gates_orthogonal-d}. If $G_1,G_2$ are adjacent,~\cref{lem:neighbored_gates} implies a distance of at least $3$, see~\cref{fig:parallel_gates_orthogonal-e}. If $g_1$ or~$g_2$ is not a child of $g$, similar arguments apply. (2) Applying~\cref{obs:covering_cells_adjacent_to_gate} yields two gates $G_1,G_2$ between $g_1$ and $g_2$ where each path between $G_1$ and $G_2$ has a length of $1$, see~\cref{fig:parallel_gates_orthogonal-f}. Finally, applying~\cref{lem:behind_gate} and the observation that $g_2$ does not lie on a gate caused by $g_2$ yields a distance of at least $3$. (3) Intuitively speaking~\cref{fig:placed_at_least_one_step_behind} implies that the same arguments as used in (2) apply to (3).

\begin{description}
\item[Neither $g_1$ is a descendent of $g_2$, nor vice versa.] 
First consider the case in which $g_1$ and $g_2$ are children of the same father $g$. 
Let $G_1$ and $G_2$ be the gates created by~$g$ corresponding to $g_1$ and $g_2$. 
Let $\pi$ be a shortest path connecting $g_1$ and $g_2$. 
Note that $\pi = (\pi_1,\pi_2, \pi_3)$ contains three subpaths, where $\pi_1$ connects $g_1$ and~$G_1$, $\pi_2$ connects $G_1$ and $G_2$, and $\pi_3$ connects $G_2$ and $g_2$. 
\cref{lem:behind_gate} implies that $\pi_1$ and $\pi_3$ have a length of at least $1$. 
If $G_1$ and $G_2$ do not share an endpoint, we obtain that $\pi_2$ has a length of at least~1, implying that $\pi$ has a length of at least~3, see~\cref{fig:parallel_gates_orthogonal-d}.

If $G_1$ and $G_2$ share an endpoint, \cref{lem:neighbored_gates} implies that $G_1$ and $G_2$ have the same orientation. 
Without loss of generality, assume that $G_1,G_2$ are oriented clockwise. 
Furthermore,~\cref{lem:adjacent_gates_parallel_and_orthogonal} implies that $G_1,G_2$ are parallel gates, whose segments lie orthogonal to one another. 
Without loss of generality, assume that $G_1,G_2$ are ordered clockwise and that $G_1$~($G_2$) is horizontal~(vertical), see~\cref{fig:parallel_gates_orthogonal-e}. 
As $G_1$ and $G_2$ are oriented clockwise, applying~\cref{lem:behind_gate} simultaneously to $g_1$ and $g_2$ implies that the $x$-coordinate of $g_1$ is at least one larger than the $x$-coordinate of $g_2$ and the $y$-coordinate of $g_1$ is at least two smaller than the $y$-coordinate of $g_2$. 
Thus, $g_1$ and $g_2$ have a distance of at least~$3$.

\item[$g_1$ is a descendent but not a child of $g_2$, or vice versa.] 
Without loss of generality, assume that $g_1$ is a descendent of $g_2$. 
Thus, there is at least one further guard $q$ being placed between $g_1$ and $g_2$, i.e., such that $q$ is a child or descendent of $g_1$ and $g_2$ is a child or descendent of $q$, see~\cref{fig:parallel_gates_orthogonal-f}. 
This implies that the shortest path $\pi$ connecting $g_1$ and $g_2$ has to cross the visibility region $\visregion(q)$ of $q$. 
\cref{obs:covering_cells_adjacent_to_gate} implies that this subpath of $\pi$ has a length of at least $1$. 
Let $G_q$ and $G_2$ be the gates between $g_1,q$ and $q,g_2$. 
\cref{lem:behind_gate} implies that the length of the subpath of $\pi$ connecting $G_2$ with $g_2$ is at least $1$. 
Furthermore, $g_1$ cannot lie on $G_q$ implying that the length of the subpath of $\pi$ connecting $g_1$ and $G_q$ also has a length of $1$. 
Hence, $\pi$ has a length of at least $3$.

\item[$g_1$ is a child of $g_2$, or vice versa.] 
Without loss of generality, assume that each segment of the gate $G_1$ corresponding to $g_1$ has a length of $1$, see~\cref{fig:wco-wlog-assumptions} for the different cases. 
In the case of a parallel gate, assume without loss of generality that $g_2$ lies adjacent to an endpoint $G_1$ resulting in a distance of $3$, see~\cref{fig:wco-wlog-assumptions-a}. 
In the case of an orthogonal gate, and that $G_1$ consist of two segments, assume without loss of generality that $g_2$ lies as close as possible to both segments of $G_1$ resulting in a distance of $4$ between $g_1$ and $g_2$, see~\cref{fig:wco-wlog-assumptions-b,fig:wco-wlog-assumptions-c}. 
Finally, in the case of an orthogonal gate that consist of a single segment, assume without loss of generality that $g_2$ lies on the wall collinear with $G_1$ resulting in a distance of at least $3$, see~\cref{fig:wco-wlog-assumptions-d}. 
\end{description}
This concludes the proof of~\cref{lem:dispersive_dist_three}.
\end{proof}

\begin{figure}[h]
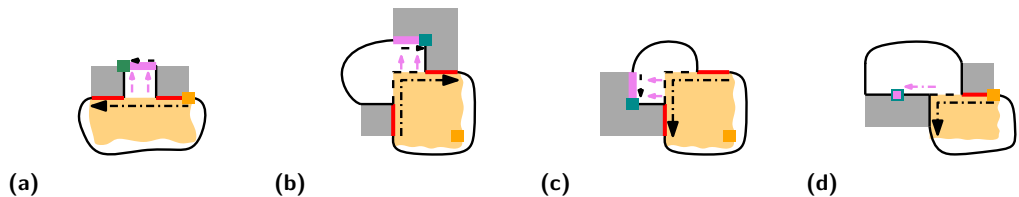

	\begin{subfigure}[b]{0.25\columnwidth}
		\centering
		\includegraphics[page=8, scale=.75]{worst-case-optimality-new.pdf}
		\caption{}
		\label{fig:wco-wlog-assumptions-a}
	\end{subfigure}\hfill%
	\begin{subfigure}[b]{0.25\columnwidth}
		\centering
		\includegraphics[page=9, scale=.75]{worst-case-optimality-new.pdf}
		\caption{}
		\label{fig:wco-wlog-assumptions-b}
	\end{subfigure}\hfill%
	\begin{subfigure}[b]{0.25\columnwidth}
		\centering
		\includegraphics[page=10, scale=.75]{worst-case-optimality-new.pdf}
		\caption{}
		\label{fig:wco-wlog-assumptions-c}
	\end{subfigure}\hfill%
	\begin{subfigure}[b]{0.25\columnwidth}
		\centering
		\includegraphics[page=11, scale=.75]{worst-case-optimality-new.pdf}
		\caption{}
		\label{fig:wco-wlog-assumptions-d}
	\end{subfigure}\hfill%
	\caption{Assuming a position for $g$ decreasing at most its distance to $q$: (a) A parallel gate. (b) A clockwise oriented orthogonal gate made up of two segments. (c) A counterclockwise oriented orthogonal gate made up of two segments. (d)~An orthogonal gate made up of one segment.}
	\label{fig:wco-wlog-assumptions}
\end{figure}

As~\cref{lem:distance-three-necessary} provides an upper bound on the dispersion distance in simple polyominoes and~\cref{lem:dispersive_dist_three} the matching lower bound, these lemmas together prove~\cref{thm:distance-three-sufficient}.

\section{Computational complexity}\label{sec:hardness}

In this section we study the computational complexity of computing guard sets that maximize the smallest distance between all pairs of guards within the guard set.
In particular, we show the following.

\begin{restatable}{theorem}{hardnessTheorem}\label{thm:dispersion-distance-5-np-hard}
	%Deciding whether there exists a guard set with a dispersion distance of 5 for a given polyomino is \NP-complete.
	In polyominoes it is \NP-complete to decide the existence of a guard set with a dispersion distance of 5.
\end{restatable}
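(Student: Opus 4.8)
The plan is to establish \NP-completeness by a reduction from a suitable combinatorial problem, most naturally \textsc{Planar 3-SAT} or a variant of \textsc{Vertex Cover}/\textsc{Independent Set} on planar graphs, exploiting the fact that polyominoes live in the plane and that the dispersion constraint (pairwise distance~$\ge 5$) is a local, geometric condition. Membership in \NP~is the easy direction: given a candidate guard set, one verifies in polynomial time that it covers the polyomino (each cell is $r$-visible from some guard) and that all $\binom{|\guardset|}{2}$ pairwise $L_1$-geodesic distances are at least~$5$; since we may assume guards sit on vertices, the guard set has polynomial size and geodesic distances in a polyomino are computable in polynomial time. So the work is entirely in the hardness construction.

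For the reduction I would build, for a given planar formula (or graph), a polyomino assembled from three kinds of thin ``gadgets'': \emph{variable gadgets}, \emph{clause gadgets}, and \emph{wire/corridor gadgets} connecting them, routed according to a planar embedding. The key design principle is that certain small regions (niches, or narrow pockets analogous to the dark magenta region in \cref{fig:distance-three-necessary}) can only be covered by a guard placed at one of a few nearby vertices, so each such region acts as a Boolean variable: the choice of which admissible vertex guards it encodes \texttt{true} vs.\ \texttt{false}. The dispersion threshold~$5$ is then tuned so that two guards forced into ``conflicting'' positions (e.g.\ the two choices that would correspond to setting a variable and its negation consistently wrong, or an unsatisfied clause) end up at geodesic distance strictly less than~$5$, while any globally consistent/satisfying assignment admits a placement where every forced pair is at distance exactly~$5$ or more. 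Threshold~$5$ (rather than~$3$ or~$4$) presumably gives exactly the slack needed: it is large enough that the coverage-forced guards in adjacent gadgets constrain each other, but small enough to be violated by a short detour through a corridor of bounded width. Correctness then has the usual two halves: (i) a satisfying assignment yields a feasible guard set of dispersion~$\ge 5$, placing guards per the variable choices and checking gadget-by-gadget that no pair drops below~$5$; (ii) any feasible guard set of dispersion~$\ge 5$ induces a satisfying assignment, arguing that coverage forces a guard into one of the designated positions in each variable gadget, that the distance constraints propagate consistency along wires, and that each clause gadget can be covered without a short pair only if at least one incident literal is true.

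The main obstacle I anticipate is the simultaneous control of \emph{two competing requirements} in the same polyomino: the construction must force guards to specific locations via \emph{coverage} (niches that admit few guarding vertices) while the \emph{dispersion} bound must be neither trivially satisfiable nor trivially violated everywhere — in particular, the gadgets and the corridors between them must be long enough that unrelated forced guards are automatically $\ge 5$ apart (so the only binding constraints are the intended ones), yet the conflict pairs must be brought within distance~$<5$ along some geodesic that genuinely stays inside the polyomino. Getting the corridor lengths, widths, and turn placements right so that geodesic $L_1$-distances land on the correct side of~$5$ in all cases, and verifying that no \emph{unintended} short pair arises between guards in different gadgets, is the delicate part; the figures and a careful case analysis over gadget adjacencies (analogous to the case distinction in \cref{lem:dispersive_dist_three}) will carry most of this burden. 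A secondary technical point is ensuring the whole construction can be made thin and that its size is polynomial in the input — which the excerpt already flags (the hardness shapes are thin but may have holes), so holes are permitted and the wire-routing need not be planar-crossing-free in the strong sense, simplifying the layout.
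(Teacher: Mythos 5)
Your outline points in the same general direction as the paper---membership in \NP{} is argued the same way, and the reduction is indeed from a planar 3-SAT variant (the paper uses \textsc{Planar Monotone 3Sat}) with variable, clause, and connector gadgets assembled along a rectilinear embedding, with niches forcing guards to a handful of admissible vertices. But as it stands the proposal has a genuine gap: for an \NP-hardness theorem whose entire content is the gadget construction, you never actually construct the gadgets, and the mechanism you describe for how information flows is not the one that can be made to work. You propose that inconsistency is detected by ``conflicting'' guards landing at distance $<5$, with ``distance constraints propagating consistency along wires.'' The paper's construction propagates a signal by \emph{coverage direction} instead: each gadget has inputs and outputs, and the two admissible guard sets of a gadget differ in whether the gadget covers its output connector from its own vertices (thereby pre-covering the next gadget's input) or must spend a guard covering its own input (and then cannot reach its output). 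A clause gadget is built (from a chain of T-shaped pockets) so that it admits \emph{no} dispersion-$5$ guard set at all unless at least one of its input cells is already covered from outside---the infeasibility comes from a pigeonhole on a path of length $9$ that must host guards for three consecutive pockets, not from a pairwise conflict between two designated positions. Without committing to one of these mechanisms and exhibiting concrete shapes with the claimed unique-up-to-two-choices guard sets (the analogue of \cref{claim:variable-gadget-at-most-5,claim:variable-gadget-exactly-two-sets}), the two correctness directions cannot be checked.

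Two further issues that your plan does not address but that the paper must (and does) handle explicitly. First, fan-out: a variable occurring in $m_i$ clauses needs its truth value replicated $m_i$ times, and because the propagated signal is ``is this connector covered from inside or outside,'' a dedicated duplicator gadget is required (\cref{lem:duplicator-is-correct}); a bare wire cannot split a coverage signal. Second, \emph{backward guarding}: once one literal satisfies a clause, the guards placed inside that clause gadget cover cells of its other input connectors, and you must verify that this spurious coverage cannot flow back through a duplicator and falsely ``satisfy'' a different clause sharing a variable (\cref{lem:duplicator-cannot-flip}). This is exactly the kind of unintended interaction you flag as ``the delicate part,'' but flagging it is not the same as ruling it out; the argument that it cannot happen depends on the specific niche placement inside the duplicator. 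So the proposal is a plausible roadmap, but the theorem is not proved until these gadgets are exhibited and these two interactions are controlled.
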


First of all, note that the problem is obviously in \NP, as it is easy to verify whether a potential set of vertices of a given polyomino is in fact a guard set with a certain dispersion distance.

\begin{observation}\label{obs:problem-in-np}
	The dispersive art gallery problem for polyominoes with vertex guards is in \NP.
\end{observation}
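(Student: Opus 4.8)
The plan is to exhibit a polynomial-size certificate that can be checked in polynomial time. The natural certificate for a yes-instance $(\polygon,\ell)$ is a guard set $\guardset$ itself. Since guards are restricted to vertices of $\polygon$, and a polyomino described by an input of size $n$ has only $O(n)$ vertices, any candidate set $\guardset$ is a subset of these vertices and hence has size $O(n)$; writing it down takes space polynomial in $n$. Given such a candidate, the verifier must confirm two things: that $\guardset$ is a guard set, i.e., every point of $\polygon$ is $r$-visible from some $g\in\guardset$; and that its dispersion distance is at least $\ell$, i.e., every pair of guards has geodesic $L_1$-distance at least $\ell$.

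The only subtlety lies in the coverage check, since $\polygon$ contains infinitely many points. I would resolve this with a standard arrangement argument. Extend the maximal horizontal and vertical segments through every vertex of $\polygon$; these $O(n)$ axis-parallel segments subdivide $\polygon$ into $O(n^2)$ rectangular faces. For any fixed $g$, the visibility region $\visregion(g)$ under $r$-visibility is bounded by segments lying on lines through vertices, so each face of this arrangement lies entirely inside or entirely outside $\visregion(g)$. Consequently it suffices to pick one representative point per face and test it against every guard. Testing whether a point $q$ is $r$-visible from $g$ amounts to checking whether the axis-aligned rectangle spanned by $g$ and $q$ is contained in $\polygon$, which is decidable in polynomial time. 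Thus coverage is verifiable in time polynomial in $n$ and $|\guardset|$.

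For the dispersion check, I would compute, for each of the $O(|\guardset|^2)$ pairs of guards, their geodesic $L_1$-distance inside $\polygon$. Because $\polygon$ is rectilinear with integer vertices, a shortest geodesic path turns only at coordinates determined by the vertices, so the distance can be obtained by running a shortest-path computation on the induced grid graph, whose size is polynomial in $n$; this takes polynomial time per pair. Verifying that each of these distances is at least $\ell$ then completes the check. Since both the coverage and the dispersion checks run in polynomial time and the certificate has polynomial size, the problem lies in \NP. The argument is routine; the only point requiring care is the discretization of the infinite point set for the coverage test, which the arrangement argument handles.
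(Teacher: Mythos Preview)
Your argument is correct and follows the same approach as the paper: the certificate is a set of vertex guards, and one verifies in polynomial time both that it covers $\polygon$ and that all pairwise geodesic distances meet the threshold. The paper treats this as self-evident and gives no details, whereas you spell out the arrangement argument for discretizing the coverage test and the grid-based shortest-path computation for the dispersion test; your added care is appropriate but not required here.
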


In the remainder of this section we first give a high-level overview of the \NP-hardness reduction, followed by a description of the involved gadgets with analyses of their properties. We conclude the section by putting everything together and proving~\cref{thm:dispersion-distance-5-np-hard}.

\subsection{Outline of the NP-hardness reduction}

For proving \NP-hardness we make use of the problem \textsc{Planar Monotone 3Sat} that is shown to be \NP-complete by de~Berg and Khosravi~\cite{dbk-obspp-10}. 
%This problem asks to decide the satisfiability of a Boolean 3-CNF formula for which the literals in each clause are either all negated or all unnegated, and the corresponding variable-clause incidence graph is planar.
This problem is a variant of the 3-satisfiability problem for which the literals in each clause are either all negated or all unnegated, and the corresponding variable-clause incidence graph is planar.

To this end, we will construct polyominoes that will represent variables and clauses. 
Because a variable may contribute to multiple clauses, we model a shape (see \emph{duplicator gadget}) that duplicates the given assignment. 
Furthermore, we describe simple shapes that are used to connect different subshapes, while maintaining the given assignment from the variables.
\cref{fig:hardness-overview} gives a high-level overview of the construction and the main gadgets.

\begin{figure}[h]
	\centering
	\includegraphics[page=12]{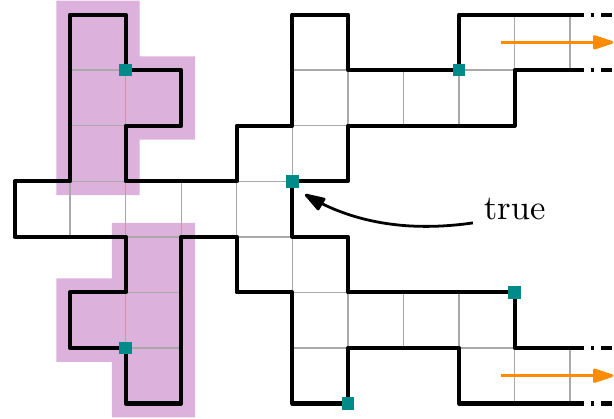}
	\caption{Symbolic overview of the \NP-hardness reduction. The depicted instance is due to the \textsc{Planar Monotone 3Sat} formula $\varphi=(x_1 \vee x_2 \vee x_4) \wedge (x_2 \vee x_4) \wedge (x_1 \vee x_4 \vee x_5) \wedge (\overline{x_1} \vee \overline{x_3}) \wedge (\overline{x_3} \vee \overline{x_4} \vee \overline{x_5})$. Variables are shown in dark cyan, clauses in magenta, duplicator gadgets in orange, and connectors as lines.}
	\label{fig:hardness-overview}
\end{figure}

The idea of the reduction is as follows: As shown in~\cref{fig:hardness-overview}, all gadgets have \emph{open~ends} (depicted by arrows) where they are connected to one another. 
These openings are distinguished as \emph{inputs} and \emph{outputs} depending on how the arrows are oriented.
Starting from the variable gadgets there is a sequence of outputs and inputs that ends in the clause gadgets. 
Through an intensive use of niches, we force the possible guard sets within each gadget to essentially two sets. 
In particular, the first set covers the output of a gadget from vertices of the gadget, and therefore it covers also the input of the next gadget in the sequence. 
The second set has to cover the gadget's input from its own vertices (because it is not already covered from before), and therefore cannot cover its output.
We use these constraints and observations to propagate variable assignments through the construction, i.e., to obtain a \emph{guarding direction}.

\subsection{Setting up the gadgets}

We now give the description of the involved gadgets and we prove several lemma that will then put together to yield a proof of~\cref{thm:dispersion-distance-5-np-hard}.

\subsubsection*{Variable gadget}

The variable gadget is depicted in~\cref{fig:variablegadget}. As we have to represent the values true and false, the gadget is constructed in such a way that it allows for two guard sets with a dispersion distance of 5. 
In addition, we make sure that no guard set with a larger dispersion distance exists.
In order to do so, we force guards to unique positions within the magenta regions, which then results in the fact that the positions of all other guards are restricted and partitioned into two disjoint sets.
	
	\begin{figure}[h]
		\begin{subfigure}{0.5\columnwidth}
			\centering
			\includegraphics[page=1, scale=.75]{hardness.pdf}
			\caption{}
			\label{fig:variablegadget-true}
		\end{subfigure}\hfill%
		\begin{subfigure}{0.5\columnwidth}
			\centering
			\includegraphics[page=2, scale=.75]{hardness.pdf}
			\caption{}
			\label{fig:variablegadget-false}
		\end{subfigure}
		\caption{The figure shows the variable gadget, and highlighted regions that are used in~\cref{claim:variable-gadget-at-most-5,claim:variable-gadget-exactly-two-sets}. (a)~shows the unique guard set representing a truth assignment, while (b) shows the respective set for a false assignment.}
		\label{fig:variablegadget}
	\end{figure}

\begin{lemma}\label{claim:variable-gadget-at-most-5}
	Within the variable gadget, no guard set has a dispersion distance larger than~5.
\end{lemma}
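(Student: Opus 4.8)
The plan is to exploit the niches of the variable gadget to show that guards are forced into a handful of ``slots'', and then to verify that every pair of slots that can simultaneously host guards — in particular within a single region and between adjacent regions — is too close together once a pairwise distance exceeding $5$ is demanded. First I would identify, for the gadget in \cref{fig:variablegadget}, the collection of niches: each niche (a degree-$1$ cell of the dual graph) must be covered, and by $r$-visibility the only vertices that see a given niche are the (at most) four vertices incident to the cells in a small neighbourhood of that niche. This pins each guard that covers a niche to a constant-size candidate set of vertices. I would then consider the highlighted magenta regions of \cref{fig:variablegadget} — these are ``bottleneck'' regions that can only be covered from one of a few vertices; arguing as in the proof of \cref{lem:distance-three-necessary}, covering such a region together with the niches nearest to it already constrains two guards to lie within distance $5$ of one another.

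The key steps, in order, are: (1) enumerate the niches and bottleneck regions of the variable gadget and, for each, list the $O(1)$ vertices from which it can be $r$-seen; (2) observe that any feasible guard set must contain, for each niche/region, at least one vertex from its candidate list; (3) for every pair of a niche and a ``nearby'' niche or region whose candidate lists are mutually within geodesic $L_1$-distance at most $5$, conclude that the two guards covering them realize a pairwise distance of at most $5$; (4) check that in the gadget at least one such close pair is unavoidable, i.e., the candidate lists cannot all be chosen to be pairwise more than $5$ apart. Because the gadget has bounded size, step (3)–(4) is a finite case check over the geometry; I would present it by pointing to the highlighted regions in \cref{fig:variablegadget} and giving the relevant distances, rather than exhausting every pair.

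The main obstacle I anticipate is step (4): one must be sure that the forced close pair is genuinely unavoidable for \emph{every} guard set, not just for the two ``intended'' configurations shown in \cref{fig:variablegadget-true} and \cref{fig:variablegadget-false}. In other words, the argument has to rule out exotic guard placements that cover the bottleneck regions from unexpected vertices while spreading the niche-guards out. I would handle this by showing that the bottleneck (magenta) regions are narrow enough that the set of vertices seeing each of them is so small that, combined with the requirement to cover every niche, a pigeonhole-style argument forces two guards within distance $5$ no matter how the remaining freedom is used; this is exactly the role played by the ``force guards to unique positions within the magenta regions'' remark preceding the lemma, and it will be made precise in the companion \cref{claim:variable-gadget-exactly-two-sets}. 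The distance computations themselves are routine $L_1$-geodesic measurements inside a fixed small polyomino, so I would keep them terse and defer the full bookkeeping to the figure.
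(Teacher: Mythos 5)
Your plan is in the right spirit---use the magenta bottleneck regions to pin guards to small candidate sets and then bound distances---but it circles around, without ever stating, the one observation that actually makes the paper's proof work. The paper's argument is not about a close pair of guards covering \emph{different} niches or regions (that is the structure of the proof of \cref{lem:distance-three-necessary}, which you invoke); it is that each dark magenta T-shape by itself forces \emph{two} guards onto its own vertices. Concretely: under $r$-visibility no single vertex sees an entire T-shape, and the only vertices outside a T-shape that could see any part of it are themselves vertices of another T-shape, so each T-shape must be covered by at least two guards chosen from its own vertex set. Any two vertices of a T-shape that can jointly cover it are at geodesic $L_1$-distance at most $5$, which gives the bound immediately. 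Your steps (3)--(4), which search for an unavoidable close pair \emph{between} nearby niches and regions, would send you down a much larger case analysis and, as you yourself flag as the ``main obstacle,'' leave open whether some exotic placement spreads the niche-guards out; the two-guards-in-one-T-shape observation is precisely what closes that loophole, and it is absent from your plan.

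So the gap is concrete: step (2) of your plan asserts only that each region receives \emph{at least one} guard from its candidate list, which is too weak---one guard per T-shape, combined with cleverly placed far-away guards for the niches, is exactly the ``exotic'' scenario you worry about, and nothing in your steps (1)--(4) rules it out. To repair the plan, replace the cross-region pigeonhole of steps (3)--(4) with: (i) a proof that a T-shape is not $r$-coverable by one vertex, (ii) a proof that no vertex outside the T-shape (other than vertices of other T-shapes, which does not help) sees any of its cells, and (iii) the distance-$5$ computation between the two forced internal guards. Deferring the numerics to the figure is fine once (i)--(iii) are in place.
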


\begin{proof}
	Consider the dark magenta regions (also called ``T-shapes'') in~\cref{fig:variablegadget-true}.
	No guard set with at least two guards that are placed exclusively on vertices of the T-shape realizes a dispersion distance larger than 5. 
	The only vertex that could partly cover such a T-shape from the outside is itself a vertex from another T-shape.
	Therefore, each region has to be covered uniquely from within it.
	Thus, the largest possible distance between these guards is 5, as shown by cyan squares.
\end{proof}

\begin{lemma}\label{claim:variable-gadget-exactly-two-sets}
	Within the variable gadget there are exactly two guard sets realizing a dispersion distance of 5.
\end{lemma}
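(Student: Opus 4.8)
The plan is to build on Lemma~\ref{claim:variable-gadget-at-most-5}, which already pins each guard covering a T-shape to a vertex of that T-shape, so that no dispersion distance exceeding~5 is possible. It remains to show that, subject to achieving dispersion distance exactly~5, the feasible guard sets are precisely the two sets depicted in Figures~\ref{fig:variablegadget-true} and~\ref{fig:variablegadget-false}. First I would argue that within a single T-shape, attaining pairwise distance~5 forces the guard there into a \emph{unique} vertex (the cyan square): the T-shape is small enough that only one vertex is simultaneously at distance~$\geq 5$ from the forced vertices of all neighbouring T-shapes and still covers the whole T-shape together with its partner. This is the step I expect to be the main obstacle, since it is a finite but fiddly case check over the vertices of each T-shape, and it must use the geometry of how the T-shapes are arranged (each T-shape's coverage genuinely requires a guard inside it, by Lemma~\ref{claim:variable-gadget-at-most-5}, and the geodesic $L_1$-distances between candidate vertices must be computed within the gadget, not in the plane).

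Next I would propagate these forced positions. Once the guards inside the T-shapes are fixed, the remaining uncovered area of the gadget consists of a bounded number of pockets/niches, and for each such region I would enumerate which vertices can cover it while maintaining distance~$\geq 5$ from all already-placed (forced) guards. The key claim is that this enumeration always leaves exactly two consistent global choices: informally, a ``left-guarding'' configuration and a ``right-guarding'' configuration, corresponding to whether the output opening of the gadget is covered from inside the gadget or not. I would make this precise by identifying a single binary choice early in the propagation (which vertex covers the first ambiguous pocket), and then showing by induction along the chain of pockets that this first choice determines every subsequent one — each later pocket has a unique admissible guard given the previous choices, because the distance-$5$ constraint against the neighbouring forced/placed guard eliminates the alternative. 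The niches inserted in the construction are exactly what makes each such local choice forced rather than free.

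Finally I would verify the two surviving candidate sets are genuinely feasible: each is a guard set (every cell of the gadget is seen, using $r$-visibility), and each has dispersion distance exactly~5 (all pairwise geodesic $L_1$-distances are $\geq 5$, with equality somewhere, so the bound from Lemma~\ref{claim:variable-gadget-at-most-5} is tight). This is a direct check against Figure~\ref{fig:variablegadget}. Combining: at most two sets survive the forcing argument, both are feasible, hence there are \emph{exactly} two guard sets of dispersion distance~5, which we label the \emph{true} and \emph{false} assignments. The one subtlety to keep an eye on is interaction across the gadget boundary — a guard outside the gadget could in principle help cover a boundary region — but Lemma~\ref{claim:variable-gadget-at-most-5} already notes that the only outside vertex able to touch a T-shape is itself a T-shape vertex of an adjacent gadget, and the connector gadgets are designed so that this does not create extra feasible configurations; I would state this explicitly and defer the cross-gadget consistency to the later gluing argument.
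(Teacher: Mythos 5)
Your proposal follows essentially the same route as the paper's proof: the distance-5 requirement forces a unique guard position inside each T-shape, leaving variability only in one remaining region where exactly two vertices are admissible, after which each choice propagates to determine all other guard positions uniquely; both resulting sets are then checked to be feasible. Your plan is, if anything, slightly more explicit than the paper about the finite case check needed to pin each T-shape guard to a single vertex (the paper asserts this uniqueness by citing the preceding lemma) and about deferring cross-gadget interactions to the global gluing argument.
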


\begin{proof}
	As the guard placement within the T-shapes is unique (see~\cref{claim:variable-gadget-at-most-5}), the only variability lies within the dark magenta region shown in~\cref{fig:variablegadget-false}. 
	However, by maintaining a distance of 5 to the necessary guards placed within the T-shapes, there are exactly two vertices that remain for covering this region.
	By choosing one, all the other positions follow uniquely.
	Hence, there are exactly two guard sets that realize a dispersion distance of 5.
\end{proof}

\subsubsection*{Clause gadget} 

The clause gadget is depicted in~\cref{fig:clausegadget}. 
The overall idea of this gadget is that it does not allow for a guard set with a dispersion distance of at least 5 if guards have to be placed only on vertices of this subshape. 
Hence, some specific cells have to be already covered from outside the shape, what will be related to satisfying the clause. 

	\begin{figure}[h]
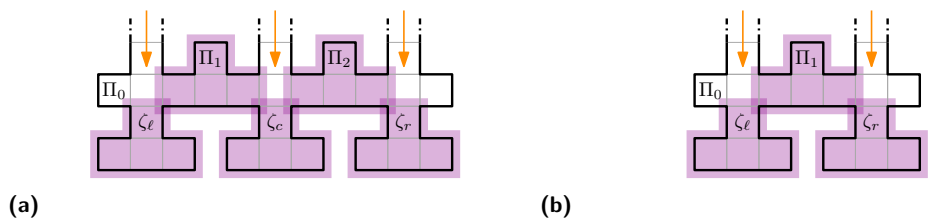

	\begin{subfigure}{0.5\columnwidth}
		\centering
		\includegraphics[page=5, scale=.75]{hardness.pdf}
		\caption{}
		\label{fig:clausegadget-three}
	\end{subfigure}\hfill%
	\begin{subfigure}{0.5\columnwidth}
		\centering
		\includegraphics[page=6, scale=.75]{hardness.pdf}
		\caption{}
		\label{fig:clausegadget-two}
	\end{subfigure}
	\caption{(a) The depicted shape represents a clause gadget containing three literals, while (b)~shows a clause with two literals. The cells labeled with $\zeta$ can be covered from outside the gadget.} 
	\label{fig:clausegadget}
	\end{figure}

Note that a clause gadget ``contains'' basically two types of T-shapes, as shown in~\cref{fig:clausegadget}.
We call them \emph{prospects} if they can partly be covered from outside (i.e., these where a cell is labeled with $\zeta$), and \emph{checkers} otherwise.

\begin{lemma}
	There is no guard set with a dispersion distance of at least 5 for the shape representing the clause gadget when placing guards only within this shape.
\end{lemma}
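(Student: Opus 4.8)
The plan is to argue that any guard set restricted to the vertices of the clause gadget is forced, by the niches, into placements that are too dense. First I would recall from the discussion of the variable gadget (\cref{claim:variable-gadget-at-most-5}) that each T-shape — whether a prospect or a checker — has the property that the only vertices able to cover the ``stem'' and both ``arms'' of the T lie on that T-shape itself, except possibly for a single cell of a prospect (the one labeled $\zeta$), which can be seen from a vertex of an adjacent gadget. Consequently, every checker must be guarded entirely from its own vertices, and every prospect must have all of its cells except the $\zeta$-cell guarded from its own vertices. I would then observe, exactly as in \cref{claim:variable-gadget-at-most-5}, that covering a whole T-shape from its own vertices forces at least two guards at mutual distance at most $5$ — in fact, the geometry of the T pins them to a configuration whose pairwise distance is exactly $5$, and any attempt to use fewer guards fails to cover the arms.

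Next I would do the counting/adjacency step. The clause gadget is built so that the checkers are packed tightly together (see \cref{fig:clausegadget}): consecutive checkers — and checkers adjacent to the forced guards of neighboring T-shapes — are close enough that, once each checker is guarded from within, two guards from different (adjacent) T-shapes come within distance strictly less than $5$. The key point is that the $\zeta$-cells only ever belong to prospects, so no checker gets any ``help'' from outside; all of the checkers' guards are internal, and there are enough adjacent checkers (at least two of them sharing boundary in the relevant region) that some internal pair is forced to distance $< 5$. I would phrase this as: assume for contradiction a guard set with dispersion distance $\ge 5$ using only gadget vertices; apply the T-shape argument to each checker to locate at least one guard in each; then point to a specific pair of adjacent checkers (or a checker and a forced prospect-guard) whose forced guards lie at distance $< 5$, a contradiction.

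The main obstacle I expect is the bookkeeping of \emph{which} pairs of T-shapes are close and showing robustly that \emph{no} choice of vertices within each T-shape can avoid a short pair — i.e., that the distance-$5$ witness in one T-shape is essentially unique (up to the symmetry exploited in the variable gadget), so the ``bad'' pair cannot be dodged. This is really the same uniqueness phenomenon established for the variable gadget in \cref{claim:variable-gadget-exactly-two-sets}, so I would either invoke that reasoning directly or restate it for the clause gadget: within a checker the only way to realize distance $5$ among the two internal guards is the configuration shown by the cyan squares in \cref{fig:clausegadget}, and that configuration, for two adjacent checkers, places a guard of one within distance at most $4$ of a guard of the other. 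The remaining case to rule out is whether a prospect could be guarded \emph{including} its $\zeta$-cell from within using a single guard, but this too is impossible by the T-shape arms argument, so prospects also contribute an internal forced guard and participate in the density contradiction whenever they are the relevant neighbor. Closing these cases — checker–checker, checker–prospect, and verifying no internal single-guard shortcut exists — finishes the argument.
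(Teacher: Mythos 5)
Your central forcing claim does not hold for the clause gadget, and the argument built on it would prove too much. You import from the variable gadget the statement that a T-shape covered from its own vertices requires at least two internal guards pinned to a distance-$5$ configuration, and then look for a too-close pair across adjacent T-shapes. But the T-shapes of the clause gadget are coverable by a \emph{single} guard placed at one of the two reflex vertices where the stem meets the crossbar (under $r$-visibility such a vertex sees both the crossbar rectangle and the stem rectangle entirely); this is exactly what the paper's companion lemma exploits when it constructs a feasible dispersion-$5$ guard set for a satisfied clause, placing one guard per checker and per prospect. If each T-shape really forced two internal guards at distance exactly $5$, with adjacent T-shapes then colliding, the gadget could never be guarded with dispersion $5$ at all --- contradicting that companion lemma. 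So no pair of guards is forced inside any single T-shape, and the adjacency contradiction you aim for never materializes.

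The actual mechanism is the reverse of yours: because two guards inside one clause-gadget T-shape would necessarily be at distance less than $5$, at most one guard can be placed in each such region, and (absent coverage from outside the gadget) each region must receive one guard chosen from its two candidate junction vertices. The paper then takes three consecutive T-shapes, two of them prospects, and observes that the six candidate locations all lie on a geodesic path of length $9$; three guards that are pairwise at distance at least $5$ would require that path to have length at least $10$. That counting step --- not an adjacent-pair collision --- is the missing idea, and it is precisely the part you flagged as ``bookkeeping'' without resolving: you never exhibit a concrete pair at distance strictly below $5$, which is essential here because the lemma must exclude dispersion distance exactly $5$, not merely distances above $5$.
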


\begin{proof}
	We will only argue this in detail for the case that the clause contains three literals; similar arguments hold for the remaining case.
	Consider one of the colored T-shapes in~\cref{fig:clausegadget-three}. 
	Only a single guard can be placed within such a region if a dispersion distance of at least 5 is required. 
	Consider three consecutive T-shapes, such that two of them are prospects.
	The shortest path connecting the six potential guard locations has a length of~9.
	Hence, no guard set with a dispersion distance of~5 exists.
\end{proof}

\begin{lemma}
	If at least one cell of the clause gadget is covered from outside the gadget, a~feasible guard set with a dispersion distance of 5 exists.
\end{lemma}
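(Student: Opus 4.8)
The statement claims that if at least one cell of the clause gadget is covered from outside the gadget, then a feasible guard set with dispersion distance $5$ exists (using guards placed on vertices of the clause gadget together with whatever external guard already covers that cell). The plan is to exhibit such a guard set explicitly and verify the two required properties: (i) coverage of the entire clause gadget, and (ii) pairwise geodesic $L_1$-distance at least $5$ between all chosen guards.

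First I would recall from the previous lemma that every one of the colored T-shapes -- both the \emph{checkers} and the \emph{prospects} -- must receive a guard placed at one of its own vertices, and that at most one guard can sit inside each T-shape if dispersion distance $5$ is to be maintained; this already pins down the guard positions inside the checkers uniquely (up to the symmetric choices discussed for the variable gadget) and shows why, \emph{without} outside help, three consecutive T-shapes force a path of length $9 < 3\cdot 5$ and hence infeasibility. The key observation for the positive direction is that a prospect whose $\zeta$-cell is covered from outside no longer needs its guard to be placed so as to also see that $\zeta$-cell; this frees up the guard in that prospect to be pushed to the vertex of the prospect that is \emph{farthest} from its two neighboring T-shapes. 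I would then check, by the same kind of shortest-path bookkeeping used in the proof of the preceding lemma and in \cref{lem:dispersive_dist_three}, that once one of the (up to two) prospects in the clause gadget has its guard relocated in this way, the six potential guard locations across three consecutive T-shapes now admit a placement whose consecutive distances are each at least $5$ -- the slack gained is exactly the $\ge 1$ needed to turn the length-$9$ chain into one realizing pairwise distance $5$. For the two-literal clause the analogous, shorter computation applies.

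The remaining points to nail down are: that the relocated guard in the freed prospect, together with the external guard covering the $\zeta$-cell, still covers \emph{all} cells of that prospect (this follows from $r$-visibility: the $\zeta$-cell is exactly the part of the prospect not seen from the relocated vertex, and it is covered from outside by hypothesis); that the guards in the untouched checkers and in the other prospect still cover their T-shapes; and that the ``connector''/open-end cells of the clause gadget that are not inside any T-shape are covered either by one of these guards or by the input guard of the adjoining gadget in the sequence. I would organize this as: (1) fix the guard positions in all checkers; (2) for the freed prospect, place its guard at the far vertex and invoke the external cover of $\zeta$; (3) for any other prospect, place its guard at the standard vertex (as in the negative lemma); (4) verify coverage cell by cell using $r$-visibility, referring to \cref{fig:clausegadget}; (5) verify the pairwise distance bound along $\partial\polygon$ using the geodesic-distance arguments already developed.

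The main obstacle I expect is step (5): carefully confirming that \emph{every} pair of chosen guards -- not just consecutive ones along the boundary of the gadget, but also a guard of this clause gadget against a guard of an adjacent connector or duplicator gadget -- is at distance at least $5$. Within a single clause gadget this is a finite case check over the T-shape positions; the cross-gadget pairs will ultimately be handled when the gadgets are composed (in the proof of \cref{thm:dispersion-distance-5-np-hard}), so here it suffices to argue that the chosen locations lie at distance at least $5$ from every open end of the clause gadget, which again reduces to measuring a bounded number of boundary paths in the figure. Once these finitely many distances are checked against the value $5$, both required properties hold and the lemma follows.
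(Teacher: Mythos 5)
Your overall strategy matches the paper's: exhibit an explicit guard set, using the key observation that external coverage of a $\zeta$-cell frees the corresponding prospect's guard to move away from the too-short chain of T-shapes, and then verify coverage and the pairwise distances by a finite check against the figure. That is the right idea, and the coverage argument via $r$-visibility (the $\zeta$-cell is precisely what the relocated guard gives up, and it is covered from outside by hypothesis) is also how the paper argues. The one substantive point your plan glosses over is that the finite check you defer to steps (4)--(5) is not uniform: the paper's construction genuinely depends on \emph{which} cell is externally covered, and the two cases yield structurally different guard sets. If the central cell $\zeta_c$ is covered, the freed guard drops to a bottom corner of the central prospect, which splits the remaining four T-shapes into two independent pairs that are then handled locally (guards at the bottom left vertex of $\zeta_\ell$, the bottom right vertex of $\Pi_1$, the top right vertex of $\Pi_2$, and the bottom right vertex of $\zeta_r$). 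If instead a side cell such as $\zeta_\ell$ is covered, the freed guard goes into the leftmost niche $\Pi_0$, which covers the bottom parts of \emph{both} checkers and forces the checkers' guards to the top vertices of $\Pi_1$ and $\Pi_2$ --- positions different from those in the first case --- before the remaining prospects are filled in. So your single prescription ``move the freed prospect's guard to its far vertex and keep the other prospects at their standard positions'' does not instantiate as stated in the second case; the case distinction is part of the argument, not an implementation detail. (Also a minor slip: the three-literal gadget has three prospects, one per literal, not ``up to two.'') With that case analysis supplied, your proof is the paper's.
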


\begin{proof}
	Again, we will only argue the more complicated case, i.e., a clause containing three literals.
	For this, consider the marked region in~\cref{fig:clausegadget-three} and distinguish the following.
	
	First assume that the central connector is covered from outside the gadget. 
	Thus, in particular cell $\zeta_c$ is already covered, and we can place a guard in a bottom corner of this prospect.
	This results in two disjoint pairs of uncovered T-shapes, such that each of these pairs share a vertex of the shape.
	Without loss of generality, consider the left pair.
	Two guards can be placed at the bottom left vertex of $\zeta_\ell$ and bottom right of $\Pi_1$ with a  distance of 5.
	Because the guard that covers $\Pi_1$ already covers the bottom part of the other checker, we can place a guard in its niche, i.e., at the top right vertex of $\Pi_2$.
	Placing a guard at the bottom right vertex of $\zeta_r$ completes the guard set.
	
	On the other hand, without loss of generality, let the left connector be already covered, i.e., the cell $\zeta_\ell$ is covered. 
	A guard can be placed in the leftmost niche $\Pi_0$ to cover the bottom part of both checker. 
	Therefore, we only have to cover $\Pi_1$ and $\Pi_2$ by placing guards at their respective top vertices.
	It~remains to cover the prospects.
	Because guards are placed at the top vertices of the checker's niches, we can place guards in appropriate distances to obtain a guard set with dispersion distance~5.
\end{proof}

Due to the respective embedding of the overall shape it may be necessary to enlarge the clause gadget, see~\cref{fig:clausegadget-wide}. 

	\begin{figure}[h]
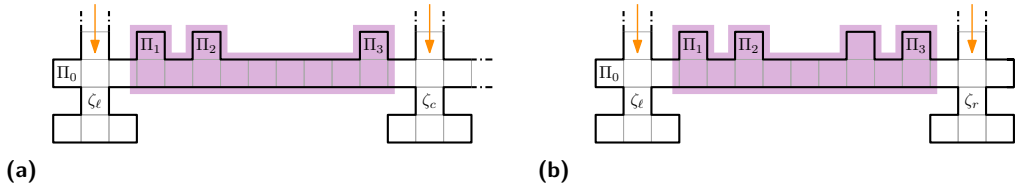

	\begin{subfigure}{0.5\columnwidth}
		\centering
		\includegraphics[page=7, scale=.75]{hardness.pdf}
		\caption{}
		\label{fig:clausegadget-wide-three}
	\end{subfigure}\hfill%
	\begin{subfigure}{0.5\columnwidth}
		\centering
		\includegraphics[page=8, scale=.75]{hardness.pdf}
		\caption{}
		\label{fig:clausegadget-wide-two}
	\end{subfigure}
	\caption{(a) shows the left part of an enlarged clause gadget containing three literals, while (b)~shows the respective enlarged shape for clauses containing two literals.} 
	\label{fig:clausegadget-wide}
\end{figure}
\clearpage
\begin{lemma}\label{lem:enlarged-gadget}
	A clause gadget can be enlarged in a way that all functionalities are maintained.
\end{lemma}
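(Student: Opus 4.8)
The plan is to show that the clause gadget admits a family of ``stretched'' variants, parametrized by the horizontal distance one needs to bridge, and that every such variant preserves the two structural properties established in the previous two lemmas: (i) no guard set of dispersion distance $5$ exists if all guards lie on vertices of the gadget alone, and (ii) if at least one $\zeta$-cell is covered from outside, a guard set of dispersion distance exactly $5$ does exist. The enlargement itself is purely combinatorial: I would describe it as inserting, into the ``spine'' of the clause gadget of \cref{fig:clausegadget}, a number of additional unit-width corridor cells between consecutive T-shapes, choosing this number so that the total length is a multiple of the period with which checkers and prospects repeat. Crucially, the niches that define the prospects and checkers — and hence the forced guard positions analyzed before — are not altered; only the lengths of the connecting corridors change, and they change by amounts that are multiples of the relevant step size (so that the relative parity of the forced positions is preserved). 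This is exactly the picture in \cref{fig:clausegadget-wide}.

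The main steps, in order, are the following. First I would give the explicit combinatorial description of the enlarged gadget: keep every T-shape (prospect or checker) verbatim, and replace each straight corridor segment joining two consecutive T-shapes by a longer corridor whose length increases by a multiple of the fixed period $p$ used in the original construction; record that the $\zeta$-labeled cells remain the only cells reachable from outside. Second, I would re-run the argument of the lemma ``There is no guard set with a dispersion distance of at least $5$ \dots''\ on the enlarged shape: the niches still force at most one guard per T-shape, three consecutive T-shapes (two of them prospects) still have their six candidate guard locations connected by a path whose length is now $9$ plus a multiple of $p$ — but since we only need to rule out dispersion distance $5$ and the governing obstruction is the short sub-path between the two checker candidates (which is untouched by the stretching), the same counting gives that no feasible set exists. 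Third, I would re-run the positive lemma: when a $\zeta$-cell is covered from outside, the same case analysis (central connector covered vs.\ an outer connector covered) applies; the only change is that the guards in the lengthened corridors can be spread out, and because each corridor length was chosen to be a multiple of $p$ one can always realize inter-guard distances of exactly $5$ along it, so a feasible guard set of dispersion distance $5$ still exists. Fourth, I would note that the same recipe works for the two-literal clause of \cref{fig:clausegadget-wide-two} with the obviously modified period.

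The hard part will be pinning down the ``period'' $p$ and verifying that \emph{every} required enlargement length that can arise from an embedding of the variable-clause incidence graph is actually realizable as a multiple of $p$ (i.e.\ that the gadget is flexible enough, and that one never needs a sub-period stretch that would break the parity of the forced guard positions). This amounts to checking that the corridor can be extended in increments compatible with the $5$-spacing, which is a finite case check on the local picture around a T-shape; I expect this to be the only place where genuine care is needed, the rest being a routine re-examination of the two preceding proofs with the corridor lengths treated symbolically. I would also remark that since the enlargement only adds corridor cells and never a $2\times 2$ block, the enlarged clause gadget is still thin, which is needed for the final remark in the abstract.
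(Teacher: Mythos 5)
There is a genuine gap in your treatment of the negative property, and it is precisely the point the paper's construction is designed to handle. The infeasibility argument for the unsatisfied clause rests on a global counting bound: the six candidate guard locations of three consecutive T-shapes are connected by a path of total length $9$, which is too short to host three guards at pairwise distance at least $5$ (that would require length at least $10$). If you lengthen the corridors between T-shapes while, as you explicitly stipulate, leaving the niches untouched, this path grows to $9$ plus your stretch amount, and the obstruction evaporates: the three forced guards can now simply spread out along the longer corridor. Your claim that ``the governing obstruction is the short sub-path between the two checker candidates, which is untouched by the stretching'' does not match the actual argument and does not rescue it; no local sub-path by itself rules out a dispersion distance of $5$. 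The difficulty is not one of parity or of choosing the right period $p$.

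The paper's fix is structural rather than metric: the checker T-shape is replaced by a larger sub-gadget (mirrored along the central connector, with an arbitrarily extensible middle section) that contains an \emph{additional} niche $\Pi_2$. The niches $\Pi_1$ and $\Pi_3$ of the enlarged region coincide with the niche of the original checker and so reimpose the old restrictions, while the new niche $\Pi_2$ forces an extra guard into the stretched portion and in particular forbids placing a guard at the bottom right vertex of $\Pi_1$ when the clause is unsatisfied. In other words, every unit of added length comes bundled with added niches that absorb the extra slack, so the counting argument survives. Your proposal contains no analogous mechanism, so as written the enlarged gadget would admit a dispersion-$5$ guard set even for an unsatisfied clause, breaking the reduction. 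If you want to salvage your ``periodic stretching'' framing, the period must include new forced-guard niches, at which point you have essentially rediscovered the paper's construction.
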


\begin{proof}
	If the clause contains three literals, we replace the T-shape checker by the colored region in~\cref{fig:clausegadget-wide-three}. 
	Note that this region is mirrored vertically along the center connector, and that the region between $\Pi_2$ and $\Pi_3$ can be enlarged arbitrarily.
	
	A crucial observation is that the niches $\Pi_1$ and $\Pi_3$ coincide in the short clause gadget, and therefore apply the same restrictions as before. 
	The additional niche $\Pi_2$ guarantees that we cannot place a guard at the bottom right vertex of~$\Pi_1$, assuming that the clause is not satisfied through an assignment.
	
	If the clause only contains two literals, the T-shape checker will be replaced by the colored region in~\cref{fig:clausegadget-wide-two}. 
	The correctness follows analogously.
\end{proof}
	
\subsubsection*{Duplicator gadget} 

Because a variable may contribute to more than one clause, we need to duplicate the respective assignment. 
For this purpose, we construct the duplicator gadget that is depicted in~\cref{fig:duplicatorgadget}. 
It works as follows: if the incoming connector is covered from outside the gadget, both outgoing connectors can be covered from within the gadget. 
Similarly, if the incoming connector has to be covered from within the gadget, the outgoing connectors must be covered from outside the gadget.

	\begin{figure}[h]
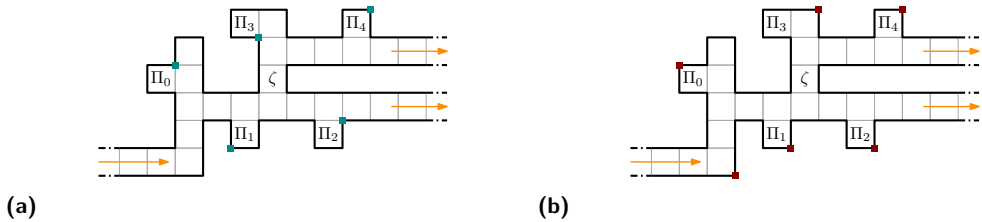

		\begin{subfigure}{0.5\columnwidth}
			\centering
			\includegraphics[page=4, scale=.75]{hardness.pdf}
			\caption{}
			\label{fig:duplicatorgadget-true}
		\end{subfigure}\hfill%
		\begin{subfigure}{0.5\columnwidth}
			\centering
			\includegraphics[page=3, scale=.75]{hardness.pdf}
			\caption{}
			\label{fig:duplicatorgadget-false}
		\end{subfigure}
		\caption{The figure shows the duplicator gadget. (a) shows a set of guards duplicating a true assignment, while (b) shows the respective guard set for a false assignment.}
		\label{fig:duplicatorgadget}
	\end{figure}

\begin{lemma}\label{lem:duplicator-is-correct}
	The duplicator gadget is correct, i.e., any output is equal to the input.
\end{lemma}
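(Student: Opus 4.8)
The plan is to analyze the two forced guard configurations inside the duplicator gadget, exactly mirroring the style of the variable-gadget argument (Lemmas~\ref{claim:variable-gadget-at-most-5} and~\ref{claim:variable-gadget-exactly-two-sets}), and to show that the covering state of the incoming connector determines the covering state of both outgoing connectors. First I would identify the T-shapes (checkers and prospects) and niches contained in the gadget, and invoke the now-standard fact that, to maintain a dispersion distance of $5$, at most one guard can sit on the vertices of any single T-shape, and that each checker-type T-shape must be guarded from its own vertices since the only vertices able to partly cover it from outside belong to another T-shape. This pins down a small finite set of candidate guard positions and reduces the claim to a case check over how the incoming connector cell $\zeta$ is covered.

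Next I would treat the two cases separately. \emph{Case 1: the incoming connector is already covered from outside.} Then the guard that would otherwise have to sit near the input is freed, and — as in Figure~\ref{fig:duplicatorgadget-true} — I would exhibit an explicit placement of guards on the gadget's vertices realizing pairwise distance $\ge 5$ in which both outgoing connector cells are seen from inside; hence both outputs are ``covered from within,'' i.e., they do \emph{not} need to be covered by the next gadget. \emph{Case 2: the incoming connector must be covered from within the gadget.} Then a guard is forced onto the unique vertex (or one of the symmetric pair of vertices) that covers $\zeta$; maintaining distance $5$ from this guard and from the forced checker guards, I would argue that the vertices which could cover the two outgoing connector cells from inside are all now excluded (they lie within distance $<5$ of an already-placed guard), so each outgoing connector is \emph{not} covered from within and must therefore be covered by the succeeding gadget — see Figure~\ref{fig:duplicatorgadget-false}. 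Combining the two cases gives exactly the stated input/output equivalence, and by the symmetry of the gadget the two outputs behave identically, so the assignment is genuinely duplicated.

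Finally I would note that, just as for the variable gadget, these are the \emph{only} two guard sets of dispersion distance $5$ supported by the gadget (the checker placements are unique, and once the input state is fixed every remaining position follows), which is what makes the propagation of the guarding direction well-defined when gadgets are chained together as in Figure~\ref{fig:hardness-overview}.

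I expect the main obstacle to be the bookkeeping in Case~2: one must verify that \emph{every} vertex capable of covering an outgoing connector cell from inside the gadget is ruled out by the distance-$5$ constraint against some forced guard, without accidentally leaving a ``loophole'' vertex — this is precisely the kind of local distance computation that the figures are designed to make transparent, and the argument hinges on the geometry (niche depths and corridor widths) having been chosen so that the relevant distances fall just below $5$. A secondary subtlety is handling the two symmetric choices for the input-covering guard uniformly, but this is immediate from the gadget's mirror symmetry.
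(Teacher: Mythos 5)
Your proposal follows essentially the same route as the paper's proof: a two-case analysis (input covered from outside versus from within) in which the niches force a unique chain of guard positions, so that in the first case both outputs are covered from inside and in the second case they are not. The paper's version simply instantiates this with the specific niches $\Pi_0,\dots,\Pi_4$ and the cell $\zeta$ from Figure~\ref{fig:duplicatorgadget}, which is exactly the figure-dependent bookkeeping you correctly flag as the remaining work.
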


\begin{proof}
	First consider the situation given in~\cref{fig:duplicatorgadget-true}. 
	Because the incoming connector is covered from the outside, we want to cover the outgoing connectors from the inside. 
	We will argue that the configuration in the marked region is unique and fulfills the requirements. 
	Because of niche $\Pi_3$, covering $\Pi_1$ by a guard placed at vertices of $\zeta$ is not possible, and because of $\Pi_0$ and $\Pi_2$ the guard covering $\Pi_1$ is uniquely defined. 
	Because this position is fixed, all other positions follow.
	
	Now consider the situation in~\cref{fig:duplicatorgadget-false}. 
	The incoming connector has to be covered from the inside.
	Because of $\Pi_0$, the position of the guard covering the incoming connector is uniquely defined. 
	Therefore, there are two positions left to cover the niche $\Pi_1$; however, because of $\Pi_3$, we cannot choose the vertex of $\zeta$. 
	It follows that the positions for guarding $\Pi_1$ and $\Pi_2$ are uniquely defined. 
	Because $\zeta$ cannot be covered from the outside, the guard covering this square is also uniquely defined and also covers $\Pi_3$ simultaneously. 
	This again leaves only a single position to cover $\Pi_4$. 
	Overall, this leaves some squares of the outgoing connectors uncovered, so that they have to be covered from outside the gadget.
\end{proof}

Because a necessary condition to the problem is that the guard set have to cover the polyomino completely, we have to ensure that visibility regions that are induced by guards from clause gadgets do not interfere the assignment given due to guards from within the variable gadgets. 
So, if a clause $C_i = (x_j, x_k, x_\ell)$ is satisfied by say $x_j$, we have to make sure that other clauses containing $x_k$ or $x_\ell$ but not $x_j$ are not become automatically satisfied by \emph{backward guarding} from guards in $C_i$.
Preventing this is also the job of the duplicator gadget.

\begin{lemma}\label{lem:duplicator-cannot-flip}
	Backward guarding of an output of the duplicator gadget cannot result in covering the other output from within the gadget.
\end{lemma}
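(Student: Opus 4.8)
\textbf{Proof plan for \cref{lem:duplicator-cannot-flip}.}

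The plan is to argue that the two outputs of the duplicator gadget are geometrically separated by the ``central'' part of the gadget (the region around $\Pi_1,\Pi_2$ and the square $\zeta$ that were analyzed in \cref{lem:duplicator-is-correct}), so that no single guard placed on a vertex of the gadget can simultaneously see cells of both outgoing connectors. Concretely, I would first recall from the proof of \cref{lem:duplicator-is-correct} the forced positions: the guard covering the incoming connector is pinned by $\Pi_0$, the guard covering $\Pi_1$ is pinned by $\Pi_0$ and $\Pi_2$ (and cannot sit on $\zeta$ because of $\Pi_3$), and the guard covering $\zeta$ is forced and also covers $\Pi_3$. The key structural fact to extract is that every vertex of the gadget that is incident to a cell of the \emph{left} outgoing connector lies strictly to the left of the vertical side of $\zeta$, and symmetrically every vertex incident to a cell of the \emph{right} outgoing connector lies strictly to the right of it; moreover any axis-aligned rectangle inside the gadget that touches a cell of the left connector is blocked (by the notch carved out for $\zeta$ and its niches) from reaching across to the right connector, and vice versa.

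Next I would make the ``backward guarding'' scenario precise: suppose the incoming connector is covered from outside, so inside the gadget we are in the configuration of \cref{fig:duplicatorgadget-true} except that we do \emph{not} assume the outgoing connectors are covered from within — instead we ask whether a guard covering, say, a cell of the right outgoing connector (placed on some vertex of the gadget) can also cover a cell of the left outgoing connector. By the separation fact above, such a guard would have to see, via a single $r$-visibility rectangle, a cell left of $\zeta$'s vertical side and a cell right of it; but any such rectangle must contain a horizontal strip spanning the width of the notch around $\zeta$, which is not contained in the polyomino. Hence no guard on a gadget vertex covers cells of both outputs. It then follows that if one output is fully covered by backward guarding from a clause gadget, the configuration forced on the remaining (uncovered) cells is exactly the ``covered-from-within'' configuration of \cref{lem:duplicator-is-correct} for the other side, which (as shown there) leaves squares of the other outgoing connector \emph{uncovered}; so that other output is not covered from within the gadget.

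The main obstacle I anticipate is purely a bookkeeping one: making the ``separation'' claim watertight requires going cell by cell through the (somewhat intricate) shape of \cref{fig:duplicatorgadget} and checking that every vertex incident to a left-connector cell really is walled off from the right connector by the $\zeta$/niche notch, including the vertices near the branch point where the incoming connector splits. I would handle this by identifying the single vertical side of $\zeta$ (extended maximally inside the gadget) as an explicit separating segment, verifying that it meets $\partial\polygon$ on both ends within the gadget, and then invoking that an $r$-visibility rectangle cannot cross a maximal interior segment whose endpoints lie on the boundary — reducing the case analysis to a one-line argument rather than an exhaustive vertex enumeration. A secondary point to be careful about is that ``backward guarding'' from a clause gadget could in principle place guards rather deep inside the duplicator; but the niches $\Pi_0,\ldots,\Pi_4$ bound how far such a guard can intrude while still respecting dispersion distance $5$, so the same forced-configuration analysis of \cref{lem:duplicator-is-correct} applies.
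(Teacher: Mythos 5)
Your proposal reaches the right conclusion, and it does contain the paper's actual argument --- but only as what you call a ``secondary point,'' while most of your effort goes into a separation claim that the paper does not need. The paper's proof is essentially two sentences: assume without loss of generality that the duplicator propagates false (the configuration of \cref{fig:duplicatorgadget-false}); the niches $\Pi_0,\dots,\Pi_4$ force the internal guard positions uniquely, the backward coverage arriving from a clause gadget does not reach any of these niches, hence the set of admissible guard positions is unchanged and the other output remains uncovered from within. Note that the real threat the lemma guards against is not that a \emph{single} guard sees both outputs --- it is that external coverage of one output relaxes the covering constraints inside the gadget and thereby frees the internal guards to migrate toward the other output. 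Your separating-segment argument about $r$-visibility rectangles not crossing the $\zeta$-notch rules out the first (minor) threat but says nothing about the second; only your closing remark (``the niches bound how far such a guard can intrude\dots so the same forced-configuration analysis applies'') addresses it, and that remark \emph{is} the proof. A related misdirection: you set up the main scenario with the \emph{incoming} connector covered from outside, i.e.\ the true-propagating configuration of \cref{fig:duplicatorgadget-true}; but in that configuration both outputs are already covered from within by design, so nothing can ``flip.'' The critical case is the false-propagating duplicator with one \emph{output} backward-covered, which is where the niche argument must be applied. So: correct in substance, but invert the emphasis --- lead with the ``no niche is covered from outside, hence the forced positions of \cref{lem:duplicator-is-correct} persist'' argument, and drop or demote the separation machinery.
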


\begin{proof}
	Consider without loss of generality that the duplicator gadget propagates false from the respective variable. 
	A critical situation would occur if the assignment could be flipped within a duplicator gadget due to the coverage coming from a clause gadget, i.e., propagating true to the other output. 
	
	As argued above, due to the positions of the niches, the positions for guards are highly restricted.
	The coverage from the outside does not cover any of these niches.
	Therefore, it does not change the possible set of guard positions.
\end{proof}

\subsubsection*{Connector gadget} 

Now that we have the main components, we need to connect them. 
For this, we introduce two different connector gadgets, see~\cref{fig:connector}.
	\begin{figure}[h]
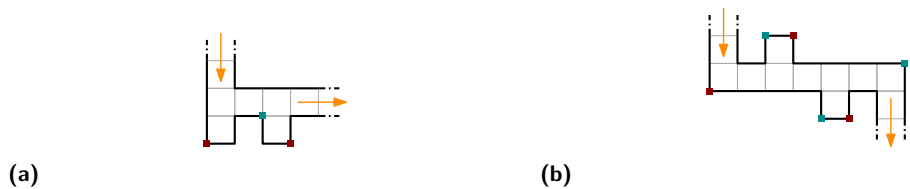

		\begin{subfigure}[b]{0.5\columnwidth}
			\centering
			\includegraphics[page=9, scale=.75]{hardness.pdf}
			\caption{}
			\label{fig:connector-L}
		\end{subfigure}\hfill%
		\begin{subfigure}[b]{0.5\columnwidth}
			\centering
			\includegraphics[page=11, scale=.75]{hardness.pdf}
			\caption{}
			\label{fig:connector-Z}
		\end{subfigure}
		\caption{(a) shows an L-connector, and (b) shows a Z-connector. The dark cyan and red colored guard sets propagate whether a variable is set to true or false, respectively.} 
		\label{fig:connector}
	\end{figure}

\begin{lemma}\label{lem:connector-gadgets}
	All connector gadgets fulfill the property that either the input, or the output can be guarded from within the gadgets by a guard set with a dispersion distance of~5.
\end{lemma}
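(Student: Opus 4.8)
The plan is to handle each of the two connector gadgets --- the L-connector in \cref{fig:connector-L} and the Z-connector in \cref{fig:connector-Z} --- separately, and for each to show that the guard positions are forced into essentially two families: one ``input-covering'' family and one ``output-covering'' family, each realizing dispersion distance exactly~$5$, and each mutually exclusive with the other. First I would identify the T-shapes (in the sense used for the variable and clause gadgets) embedded in each connector, and invoke the same reasoning as in \cref{claim:variable-gadget-at-most-5}: each such T-shape must be covered by a single guard lying on one of its own vertices, because the only outside vertex that could partly cover it belongs to another T-shape, and any two guards confined to one T-shape are at distance at most~$5$. This pins down all but one or two ``free'' guards whose position depends on whether the input square is already covered from outside.

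Next, for each connector I would do the case split. In the case that the incoming connector square is covered from outside the gadget, I would exhibit the concrete guard set (the colored sets already drawn in \cref{fig:connector}) and verify two things: (i) it covers every cell of the gadget, using $r$-visibility and \cref{obs:covering_cells_adjacent_to_gate}-type arguments along the arms of the L- or Z-shape; and (ii) every pair of its guards is at $L_1$-geodesic distance at least~$5$, which reduces to checking the shortest path along the (thin) arms between consecutive guards --- a routine length count on the polyomino grid. In the complementary case, where the input must be covered from within, the niches force the input-covering guard to a unique vertex, which in turn (again by the niche-forcing argument as in \cref{lem:duplicator-is-correct}) propagates uniquely down the chain of T-shapes and leaves the output square uncovered, so the output must be handled from outside; I would check the same covering/distance properties for this second set. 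The statement ``either the input or the output can be guarded from within'' is then exactly the disjunction of these two feasible configurations, and the niche placements guarantee no third configuration achieves dispersion distance~$5$.

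The main obstacle I expect is purely bookkeeping rather than conceptual: the L- and Z-connectors have bends, so the shortest geodesics between guards turn corners, and one must be careful that a guard placed near a reflex vertex of a bend does not ``see around the corner'' and thereby (a) shorten a geodesic below~$5$ to some guard in a neighbouring gadget, or (b) accidentally cover a square it should not, which would break the propagation of the assignment (this is the same kind of subtlety that \cref{lem:duplicator-cannot-flip} addresses for the duplicator). So the careful part is verifying, for every bend and every relevant guard, that $r$-visibility does not leak across the corner in an unwanted way; once the figures are read off correctly, each individual check is a short grid computation. I would organize the write-up as: first the L-connector (stating the two forced guard sets and their distance/coverage verification), then the Z-connector analogously, noting that the extra length of the Z-shape only adds slack and never tightens any pairwise distance below~$5$.
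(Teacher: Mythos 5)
Your proposal matches the paper's argument: the paper likewise splits on whether the input is already covered from outside, exhibits the two colored guard sets from \cref{fig:connector} (dark cyan for covering the output from within, red for the forced input-covering configuration), verifies the distance-$5$ and coverage properties, and uses the niche placements to argue that the red configuration is forced and leaves the output uncovered. Your additional care about $r$-visibility leaking around the bends of the L- and Z-shapes is a reasonable elaboration of the same verification the paper leaves implicit.
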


\begin{proof}
	As these gadgets connect the previous ones, we distinguish between the cases that their input is already covered or not. Remark that if the input is already covered, we want to cover the output within the connector, and vice versa.
	
	We prove this by providing specific sets of guards, regarding the different settings.
	For the case that the input is already covered, consider the dark cyan placed guards. 
	The distance between guards is at least 5, and everything is covered.
	For the case that the input has to be covered, consider the red guarding positions.
	The placement of the niches force the position of the guard that covers the input. 
	All other positions follow uniquely. 
	It is easy to see that no more guards can be placed, so the output remains uncovered.
\end{proof}

\subsection{Completing the proof}

In the previous subsection we described several gadgets that will now be used to construct polyominoes $\polygon_\varphi$ as instances of the \textsc{Dispersive Art Gallery Problem} from a Boolean formula $\varphi$ that is an instance of \textsc{Planar Monotone 3SAT}. 
This yields a proof for~\cref{thm:dispersion-distance-5-np-hard}, which we restate here.

\hardnessTheorem*

\begin{proof}
	As already mentioned in~\cref{obs:problem-in-np}, the problem is obviously in~\NP.
	
	To show that the problem is \NP-hard, we reduce from \textsc{Planar Monotone 3Sat}. 
	For a given formula $\varphi$ we construct an instance $\polygon_\varphi$ of \textsc{dispersive AGP} as follows; again, see~\cref{fig:hardness-overview} for the high-level idea of the construction.
	Consider the rectilinear embedding of the graph given by $\varphi$. 
	For every variable, we place a variable gadget horizontally in a row.
	Each clause is represented by a clause gadget. 
	Due~to the rectilinear embedding, we can place them vertically behind one another, and expand them appropriately if necessary as shown above.
	Without loss of generality, we place the clauses containing only unnegated literals above the variables, and below otherwise.
	If a~literal~$x_i$ occurs in $m_i$ many clauses, we construct $m_i-1$ duplicator gadgets between vertically between clauses and variables.
	We properly place a set of connector gadgets to connect variables to duplicator gadgets, as well as the outputs of duplicator gadgets to respective inputs, and duplicator gadgets to the respective clauses. Note that variables are connected to clauses if they contribute only to a single clause.
	
	\begin{description}
		\item[If $\varphi$ is satisfiable, then there is a guard set with dispersion distance 5 for $\polygon_\varphi$.]
	\end{description}
	
	Consider a satisfying assignment of $\varphi$. 
	A guard set with a dispersion distance of 5 for $\polygon_\varphi$ can be constructed as follows: 
	From the given assignment of the variable $x_i$ the respective set of guards within the variable gadget is chosen. 
	For every connector and duplicator gadget, there is a set of guards that maintains the assignment. 
	Because we propagate the satisfying assignment through the gadgets, at least one literal satisfies each clause. 
	Hence, we can choose guards within each clause gadget that has dispersion distance of 5, because in each of these gadgets at least one of the cells are covered from the outside.
	
	\begin{description}
		\item[If there is a guard set with dispersion distance 5 for $\polygon_\varphi$, then $\varphi$ is satisfiable.]
	\end{description}
	
	Consider a guard set for $\polygon_\varphi$ that has a dispersion distance of~5. 
	As argued above, at least one cell of each clause gadget are covered from outside of the respective gadget, because otherwise there is no such desired guard set.
	Furthermore, there is no guard set for the variable gadget that has a dispersion distance larger than 5, and there are only two sets that realize this pairwise minimum distance. 
	For every path from variables to clauses, the duplicator and connector gadgets provide specific locations for guards that maintain a dispersion distance of~5.
	Hence, the guards within the variable gadget of $\polygon_\varphi$ realize a satisfying assignment for~$\varphi$.
	
	This concludes the proof.
\end{proof}

\begin{figure}[p]
	\centering
	\includegraphics[page=1, scale=0.65]{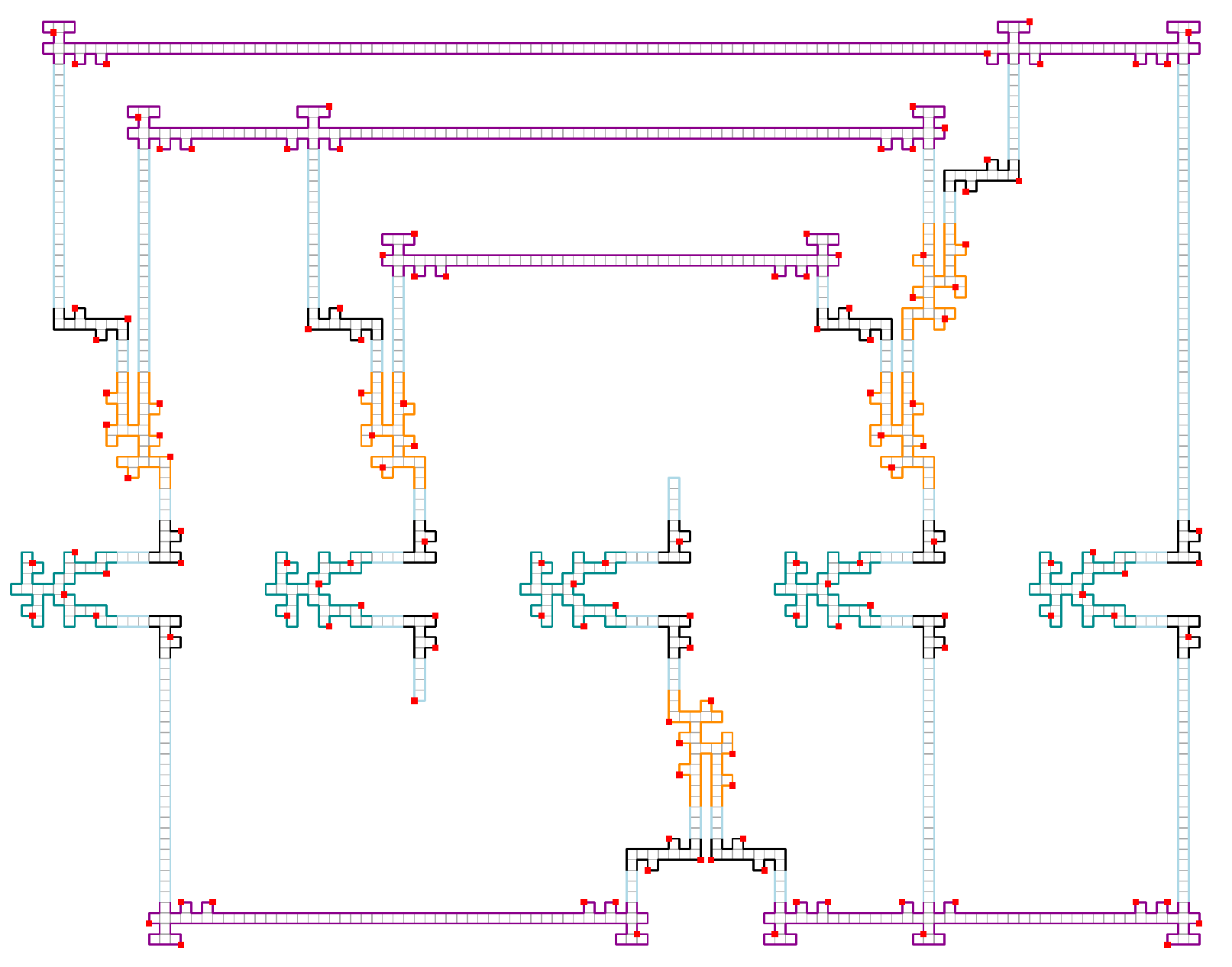}
	\caption{The polyomino $\polygon_\varphi$ as an instance for the dispersive art gallery problem derived from the Boolean formula $\varphi=(x_1 \vee x_2 \vee x_4) \wedge (x_2 \vee x_4) \wedge (x_1 \vee x_4 \vee x_5) \wedge (\overline{x_1} \vee \overline{x_3}) \wedge (\overline{x_3} \vee \overline{x_4} \vee \overline{x_5})$. Different gadgets are colored according to~\cref{fig:hardness-overview}; light blue boundary denote simple horizontal and vertical connecting paths. The~red guard set has a dispersion distance of 5 due to the assignment $(x_1,x_2,x_3,x_4,x_5) \rightarrow (0,1,1,1,0)$.}
	\label{fig:reduction-true}
\end{figure}
\begin{figure}[p]
	\centering
	\includegraphics[page=2, scale=0.65]{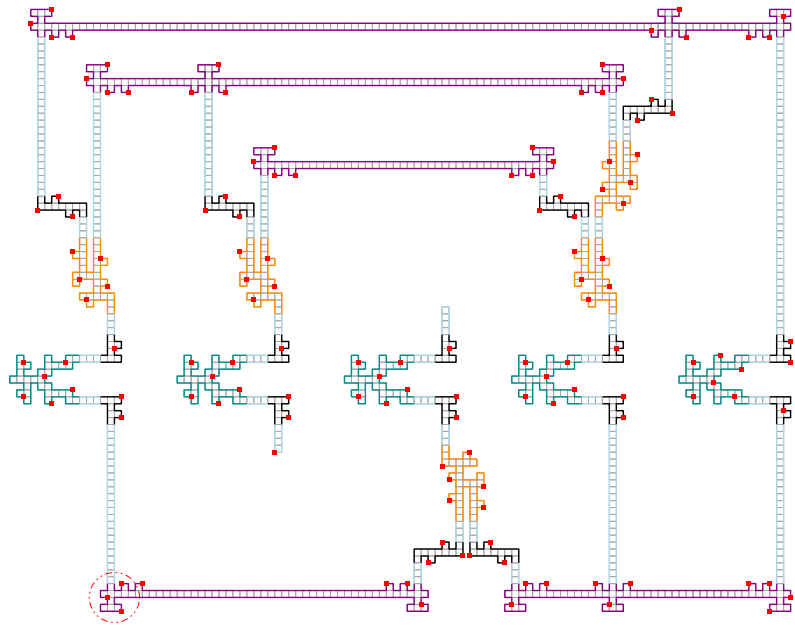}
	\caption{The figure depicts the same instance $\polygon_\varphi$ as in~\cref{fig:reduction-true}. The~red guard set has a dispersion distance of 4 due to the assignment $(x_1,x_2,x_3,x_4,x_5) \rightarrow (1,1,1,1,0)$; in particular the fourth clause is not satisfied, see red circle in the lower left.}
	\label{fig:reduction-false}
\end{figure}

\section{Optimality for tree-shaped polyominoes}\label{sec:tree-shapes}

While computing guard sets with maximum dispersion distance is \NP-hard in general, we present a linear-time algorithm to compute optimal solutions in tree-shaped polyominoes.
Recall that a polyomino $\polygon$ is tree-shaped if the dual graph of $\polygon$ is a tree. In particular, these polyominoes do not contain a $2\times 2$ subpolyomino.
\clearpage
\begin{theorem}\label{thm:optimal_for_thin_polyominos}
	Given a tree-shaped polyomino $\polygon$ with $n$ vertices, there is an $O(n)$ dynamic programming approach for computing guard sets of maximum dispersion distance.
\end{theorem}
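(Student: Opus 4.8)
The plan is to root the dual tree at an arbitrary cell and process it bottom-up, at each cell storing a bounded-size table of partial solutions for the subpolyomino hanging below that cell. Since $\polygon$ is tree-shaped it is in particular thin, so every cell has degree at most $4$ in the dual graph and the boundary is locally very simple: near any cell there are only $O(1)$ candidate vertices that a guard could occupy, and (by $r$-visibility) a guard placed at a vertex sees only a bounded ``cross'' of cells around it. The first step is to make this precise: show that in a tree-shaped polyomino the visibility region of any vertex guard is contained in the union of the (at most two) maximal horizontal and vertical strips of cells through the incident cells, so ``which guard covers cell $c$'' is a local question involving only vertices within $L_1$-distance $O(1)$ of $c$. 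This bounds the branching and is what makes a linear-time DP possible.

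Next I would set up the DP state. For a cell $c$ with parent edge $e$ (the shared side with its parent cell), the relevant ``interface'' between the subtree at $c$ and the rest of $\polygon$ is: (i) which cells adjacent to $e$ on $c$'s side still need to be covered by a guard placed outside the subtree, and (ii) the set of already-placed guards in the subtree that lie within distance $<D$ of the interface (for the candidate dispersion value $D$), recorded only via their positions relative to $e$. By the locality claim both (i) and (ii) range over an $O(1)$-size set, so the state space per cell is constant. The DP value is the best (largest) dispersion distance achievable for the subtree consistent with a given interface state; equivalently, run the DP as a feasibility test for a fixed target $D$ and binary-search / enumerate over the $O(n)$ possible values of $D$ (distances are integers bounded by the diameter), or — cleaner — carry $D$ symbolically since only finitely many ``regimes'' of $D$ matter. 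Merging children at a cell $c$ amounts to: choose which (if any) vertex guards to place at the $O(1)$ vertices of $c$, check that together with the guards reported up from the children everything incident to $c$ is covered, check all newly-relevant pairwise distances are $\ge D$ using the recorded nearby-guard sets, and combine the children's interface states; this is an $O(1)$ computation per cell.

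Then I would argue correctness: an induction up the tree showing that the table at $c$ records exactly the attainable interface states together with the optimal dispersion distance for each, using the locality claim to guarantee that no guard outside the subtree can interact with a guard deep inside it except through the recorded interface, and that no coverage obligation is lost. The running time is $O(n)$ since each of the $n$ cells is processed once with $O(1)$ work (after the $D$-enumeration is folded in, either by the symbolic argument or by observing the answer is one of $O(1)$ structurally-determined candidates read off from the final root table). The main obstacle I expect is pinning down the interface cleanly: one has to be careful that a guard placed near the root side of $c$'s subtree might need to maintain distance to a guard placed near the root side of a *sibling* subtree, so the parent must see enough information from each child — this is exactly why the state must include nearby guard positions and not merely a coverage bit, and getting the ``enough but still $O(1)$'' accounting right (especially around niches and degree-$3$ branch cells, where up to three subtrees meet and their near-interface guards must be mutually checked) is the delicate part. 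Everything else is bookkeeping analogous to standard tree DP.
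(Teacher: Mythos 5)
There is a genuine gap, and it sits at the foundation of your argument: the locality claim is false. In a tree-shaped polyomino a vertex guard's $r$-visibility region is indeed confined to the maximal horizontal and vertical strips through its incident cells, but those strips are $1\times m$ corridors of unbounded length, and the guard sees the \emph{entire} corridor (the axis-aligned rectangle it spans with any cell of the strip lies inside $\polygon$). So ``which guard covers cell $c$'' is emphatically not a question about vertices within $L_1$-distance $O(1)$ of $c$, and your interface component (i) cannot be just ``which cells adjacent to the parent edge need external coverage'': a corridor crossing the interface may be covered from outside to an arbitrary depth, or covered from inside up to an arbitrary height, so the interface must carry a distance, not a constant-size combinatorial flag. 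Component (ii) is also not $O(1)$: many branches of the subtree can each hold a guard at distance just under $D$ from the interface while remaining pairwise $\geq D$ apart, so ``all guards within distance $D$ of the interface'' is an unbounded set. The saving observation — which you do not make — is that because the subtree attaches through a single unit edge, every interaction between an inside and an outside guard routes through that edge, so it suffices to record the \emph{minimum} distance from the interface to an inside guard (refined by which of the two endpoints of the interface edge realizes it, a $\pm 1$ datum) together with a single bit saying from which side the interface is covered. That is exactly the state the paper uses: it builds its tree not on cells but on the \emph{borders} (unit sides of maximal rectangles), and stores per border a constant number of discrete states $(O,M,D)$ — an order in $\{-1,0,1\}$, the chosen guard positions among the border's two endpoints, and a seeing direction — each with one optimal numeric score (distance to the nearest guard below). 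One number per state is what makes the per-node work constant.

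A second, smaller gap is the treatment of the target $D$: binary search gives $O(n\log n)$ and full enumeration $O(n^2)$, so your $O(n)$ bound rests entirely on the unspecified ``carry $D$ symbolically'' step. The paper sidesteps the parameter altogether by propagating actual distances (scores) up the tree and optimizing them directly, rather than running a feasibility test for a fixed $D$. To repair your proposal you would need to (a) replace the locality claim by the corridor/strip structure of visibility and let the interface carry a direction-of-coverage bit plus a single optimal distance per discrete state, and (b) eliminate the outer loop over $D$; at that point you would essentially have reconstructed the paper's border-based dynamic program.
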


\begin{figure}[htb]
	\begin{subfigure}{0.33\columnwidth}
		\centering
		\includegraphics[page=1, scale=.6]{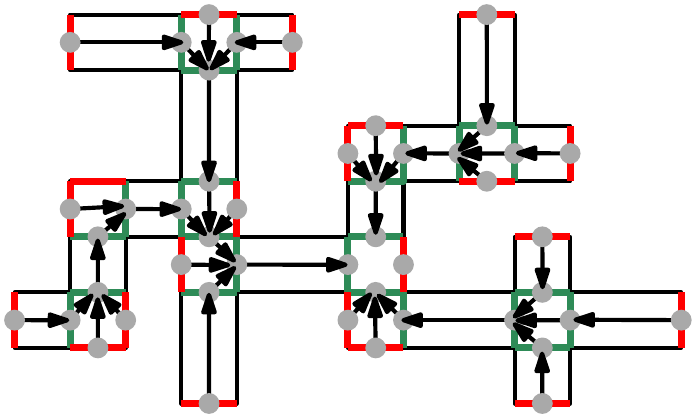}
		\caption{}
		\label{fig:dp-a}
	\end{subfigure}\hfill%
	\begin{subfigure}{0.33\columnwidth}
		\centering
		\includegraphics[page=3, scale=.6]{dp.pdf}
		\caption{}
		\label{fig:dp-c}
	\end{subfigure}\hfill%
	\begin{subfigure}{0.33\columnwidth}
		\centering
		\includegraphics[page=2, scale=.6]{dp.pdf}
		\caption{}
		\label{fig:dp-b}
	\end{subfigure}
	\caption{(a) The directed tree (black edges) and borders (red and green edges) of a tree-shaped polyomino. (b) The unique path from a border $b$ to a side of the root cell. (c) An optimal guard set (black squares) generated by our approach, with seeing directions (orange arrows).}  
	\label{fig:dp}
\end{figure}

We start by providing the main structure used in our algorithm, i.e., \emph{borders}; these are defined as follows: 
Let~$R \subseteq \polygon$ be the set of all maximal rectangles, see~\cref{fig:dp}. 
Note that, $R$ covers $\polygon$.
A side $s$ of a rectangle is an \emph{inner border} if $s \notin \partial \polygon$, see the green segments in~\cref{fig:dp-a}. 
The two sides of each rectangle having length $1$ are called \emph{outer borders}, see the red segments in~\cref{fig:dp-a}. 
Note that every outer border is on the boundary of $\polygon$, and every border is either an inner or an outer border.

Our dynamic programming approach follows a tree structure $T$ induced by the above-mentioned borders as follows: 
Let $c \in \polygon$ be an arbitrary cell containing at least two borders lying orthogonal to each other. 
In the following the cell~$c$ will contain the border that is going to be the root of $T$. 
Such a cell $c$ always exists as long as $\polygon$ is not a single $1 \times m$ rectangle for $m \in \mathbb{N}$. 
Furthermore, let $R'$ be the set of rectangles induced by the arrangement of $R$, i.e., $R'$ partitions $\polygon$. 
Thus, $c \in R'$. Let $r' \in R'$. In the following we exclusively use the term \emph{side of $r'$} for a side of $r'$ having a length of~$1$. 
Note that there are cells being rectangles $r' \in R'$. 
Thus, each rectangle $r' \in R'$ has even two or four sides. 
We define the tree $T = (V,E)$ where $V$ is the set of all borders as follows: 
As $\polygon$ is thin, for each outer border $b$ there is a unique sequence of rectangles $r_1, \dots,r_k \in R'$ such that (1) $b$ is a side of~$r_1$, (2) $r_i,r_{i+1}$ share a side $b_i$ that is an inner border for $i = 1,\dots,k-1$, and (3) $r_k$ is a side of $c$, see~\cref{fig:dp-c}. 
We define $E$ fulfilling $(b,b_1) \in E$ and $(b_i,b_{i+1}) \in E$ for $i \leq k-1$ for each border $b$.
A border~$b$ connects two positions $p_1,p_2 \in \polygon$ of which at least one is a vertex of $\polygon$, i.e., a possible position for a guard. 
Starting from a leftmost vertex of $\polygon$ with minimal $y$-coordinate, we consider all positions being part of a border to be ordered clockwise on $\partial \polygon$. 
We say that $p_1$ is \emph{smaller than} $p_2$ if $p_1$ is an (indirect) predecessor of $p_2$ in this order.

A key observation for our approach is the following:

\begin{observation}\label{obs:distance_difference-app}
	Let $p_1,p_2$ be the positions of a border and let $p_1$ be smaller than $p_2$. 
	If~$d_1,d_2$ denote the respective shortest distances to a guard of a given guard set, it holds that~$d_1-d_2 \in \{ -1,0,1 \}$.
\end{observation}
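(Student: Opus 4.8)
The plan is to prove \cref{obs:distance_difference-app} by analyzing the local geometry of a single border together with the structure of $L_1$ geodesic shortest paths in a thin polyomino. Let $b$ be a border connecting the two positions $p_1, p_2$, with $p_1$ smaller than $p_2$. Since $b$ is a side of a maximal rectangle of length $1$ (an outer border) or an inner border of length $1$ as well in the thin case, the two positions $p_1$ and $p_2$ are at $L_1$-distance exactly $1$ along the border itself, and this segment lies entirely inside $\polygon$. Let $G$ be the guard (in the given guard set) realizing $d_1 = d(p_1, G)$, and let $\pi$ be a shortest $G$--$p_1$ path inside $\polygon$.

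First I would observe that since the unit segment $\overline{p_1 p_2}$ lies inside $\polygon$, the path $\pi$ followed by this segment is a (not necessarily shortest) $G$--$p_2$ path of length $d_1 + 1$, so $d_2 \le d_1 + 1$. Symmetrically, taking a shortest $G$--$p_2$ path and appending $\overline{p_2 p_1}$ gives $d_1 \le d_2 + 1$. Hence $|d_1 - d_2| \le 1$, which already shows $d_1 - d_2 \in \{-1, 0, 1\}$ if we only needed integrality — but the real content is that $d_1$ and $d_2$ are in fact integers (or rather, that $d_1 - d_2$ is an integer). This is where I would use that $\polygon$ is a polyomino and all guards sit on vertices with integer coordinates: every geodesic between two lattice points of a polyomino has length equal to its $L_1$ length, which is an integer, because shortest paths in orthogonal polygons can be taken to be monotone staircase paths turning only at reflex vertices, all of which have integer coordinates. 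Combining the integrality of $d_1$ and $d_2$ with the inequality $|d_1 - d_2| \le 1$ yields $d_1 - d_2 \in \{-1, 0, 1\}$.

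The step I expect to be the main obstacle is making the integrality claim fully rigorous: one must argue that a shortest path in an orthogonal polygon between two lattice points can always be routed so that it is $xy$-monotone in each ``piece'' and bends only at lattice points, so that its length is the $L_1$ distance and hence an integer. In a thin polyomino this is cleaner than in general, since locally around the border $b$ the polyomino looks like a thin corridor, and the shortest path is forced through a sequence of unit-width rectangles; I would phrase the argument in terms of the rectangle sequence $r_1, \dots, r_k \in R'$ guaranteed by the thinness (the same sequence used to define the tree $T$), noting that the geodesic from $G$ to either $p_i$ must pass monotonically through these rectangles and can be perturbed onto the lattice without changing length. Once integrality is in hand, the two triangle-inequality bounds close the proof immediately, so I would present the path-concatenation bound first and the integrality lemma second.

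<br>

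Alternatively, a cleaner self-contained route avoids invoking integrality of all geodesics: argue directly that along the shortest $G$--$p_1$ path $\pi$, the last rectangle of $R'$ entered is the one having $b$ as a side, and inside that unit-width rectangle the positions $p_1$ and $p_2$ are the two endpoints of $b$; moving the target from $p_1$ to $p_2$ within this final rectangle changes the distance by exactly $+1$, $0$, or $-1$ depending on whether $\pi$ approaches $b$ from the $p_1$-side, along $b$, or from the $p_2$-side. I would sketch both routes but commit to the triangle-inequality-plus-integrality one as the primary argument, since it is shortest and relies only on standard facts about $L_1$ geodesics in orthogonal polygons with integer coordinates.
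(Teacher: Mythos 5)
Your argument is correct. The paper states this as an unproved observation, and your two-step justification is exactly the intended one: the two positions of a border are the endpoints of a unit segment contained in $\polygon$, so the triangle inequality gives $|d_1-d_2|\le 1$, and since guards and border endpoints are lattice points while $L_1$ geodesics in a polyomino bend only at (integer-coordinate) reflex vertices, both distances are integers, forcing $d_1-d_2\in\{-1,0,1\}$. Your self-assessment is also right that the integrality of geodesic distances between lattice points is the only step needing any care; in the tree-shaped setting it follows cleanly from the forced rectangle sequence you mention.
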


In the context of~\cref{obs:distance_difference-app}, we define the \emph{order} of $p_1,p_2$ as $d_1-d_2$.
Another crucial observation addresses in which way borders are seen by guards of a given guard set $\guardset$. 
We~say that an inner border $b$ is \emph{seen} by a guard $g$ when $b \subset \visregion(g)$.

\begin{observation}\label{obs:seen_border}
Let $\guardset$ be a guard set for $\polygon$. Then, for every inner border $b$ of $\polygon$ there is a guard $g\in \guardset$ that sees $b$.
\end{observation}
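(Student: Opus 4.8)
The plan is to argue by contradiction: suppose $b$ is an inner border of $\polygon$ that no guard in $\guardset$ sees in the sense that $b \subset \visregion(g)$ fails for every $g \in \guardset$. I would first recall that $b$ is, by construction, a side of two rectangles $r', r'' \in R'$ that lie on opposite sides of $b$; since $\polygon$ is tree-shaped (hence thin), $b$ has length exactly $1$, and the two cells of $\polygon$ immediately adjacent to the two sides of $b$ are distinct cells $c', c''$ sharing the side $b$. The key point is that $b$ is a \emph{separating} segment: the open segment $b$ lies in the interior of $\polygon$, and removing $b$ disconnects a neighborhood, so any axis-aligned rectangle inside $\polygon$ that contains a point strictly on one side of $b$ near $b$ and a point strictly on the other side must contain all of $b$.

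The main step is then the following. Pick the midpoint $q$ of $b$ (or any interior point of $b$); since $\guardset$ is a guard set, some guard $g$ sees $q$, i.e., there is an axis-aligned rectangle $\rho \subseteq \polygon$ with $g, q \in \rho$. Now $b$ is, say, horizontal, and it is an inner border, so cells of $\polygon$ exist both just above and just below $b$ in the vicinity of $q$ — that is what makes $b$ an \emph{inner} border rather than an outer one (an outer border borders the exterior). The guard $g$ lies in $\rho$; if $g$ lies (weakly) on one side of the line through $b$, then because $\rho$ is a rectangle containing $q \in b$ and $\rho \subseteq \polygon$, and because $\polygon$ has no $2\times2$ square, the maximality/thinness forces $\rho$ to be a $1$-wide strip aligned with $b$, whence $b \subseteq \rho \subseteq \visregion(g)$. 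The only remaining case is that $\rho$ straddles the line through $b$; but then $\rho$, being connected and containing the interior point $q$ of the separating segment $b$, must contain a whole unit-height neighborhood on both sides of $b$ near $q$, i.e., a $2\times 2$ block of cells of $\polygon$, contradicting thinness. Either way we conclude $b \subset \visregion(g)$, contradicting the assumption.

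The step I expect to be the main obstacle is making precise the ``separating'' claim and the case analysis on the position of $g$ relative to $b$: one has to use carefully that $b$ is an \emph{inner} border (so both incident cells belong to $\polygon$) together with thinness (no $2\times2$ subpolyomino) to rule out a rectangle that properly crosses $b$ transversally without covering all of $b$. Once that is pinned down — essentially, any $r$-visibility rectangle that ``reaches across'' a unit-length interior edge of a thin polyomino must run flush along that edge — the conclusion is immediate. I would also note explicitly that it suffices to treat horizontal $b$; the vertical case is symmetric under a $90^\circ$ rotation, and the definition of inner border and of thinness are both rotation-invariant.
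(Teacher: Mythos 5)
Your opening move is the right one: take an interior point $q$ of $b$ (say its midpoint) and a guard $g$ that covers $q$. The gap is in the step where you conclude $b\subseteq\rho$. You may as well take $\rho$ to be the minimal axis-aligned rectangle spanned by $g$ and $q$; for a horizontal border $b$ from $(a,h)$ to $(a+1,h)$ its $x$-range is an interval with one endpoint at $a+\frac{1}{2}$, so $\rho$ \emph{never} contains both endpoints of $b$, regardless of thinness. Concretely, if $g=(a,h+5)$ sits at the top of a vertical corridor above $b$, then $\rho=[a,a+\frac{1}{2}]\times[h,h+5]$ is a strip \emph{perpendicular} to $b$ covering only its left half; if $g=(a+10,h)$ lies on the line through $b$, then $\rho$ is a degenerate segment covering only its right half. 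In both cases $g$ does in fact see all of $b$, but not because $b\subseteq\rho$, and the asserted mechanism (``thinness forces $\rho$ to be a $1$-wide strip aligned with $b$'') is false. The straddling case is also mishandled: $[a+0.4,a+0.6]\times[h-1,h+1]$ straddles the line through $b$, lies in $\polygon$ (it meets only the two cells incident to $b$), and contains no $2\times 2$ block, so no contradiction arises; that case is simply moot once $\rho$ is taken minimal, since the minimal rectangle always has $y=h$ as one of its horizontal sides.

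What actually closes the gap is a widening argument that uses only the unit-cell structure of $\polygon$, not thinness. Say $g_x\le a$ (the case $g_x\ge a+1$ is symmetric; $g_x\notin(a,a+1)$ because $g$ is a vertex and hence has integer coordinates). Write $\rho=[g_x,a+\frac{1}{2}]\times Y$ with $Y=[\min(g_y,h),\max(g_y,h)]$. Every point of $\rho$ with $x$-coordinate in the open interval $(a,a+1)$ lies only in cells whose $x$-range is exactly $[a,a+1]$; since such a point is in $\polygon$, one of these cells belongs to $\polygon$, and that same cell contains the corresponding point with any $x$-coordinate in $[a,a+1]$. Hence the widened rectangle $[g_x,a+1]\times Y$ is still contained in $\polygon$; it contains $g$ and all of $b$, and it contains the spanning rectangle of $g$ and every point of $b$, so $b\subset\visregion(g)$. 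Note that this argument needs neither thinness nor tree-shapedness --- the observation holds in any polyomino --- which is a warning sign that a proof leaning on the absence of $2\times 2$ blocks is looking in the wrong place. (The paper states this as an observation without proof, so there is no argument of theirs to compare against; but as written, your justification of the key inclusion does not hold up.)
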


Let $b$ be an arbitrary inner border, and let $\polygon_1,\polygon_2 \subset \polygon$ be the two maximal nonoverlapping polyominoes sharing $b$ such that $c \subseteq \polygon_2$. 
We say that guard $g$ \emph{lies below (above) $b$} when $g \in \polygon_1$ ($g \in \polygon_2$). 
%Motivated by this, we say that an inner border $b$ is \emph{screened from below (above)} when there is a guard $g$ screening $b$ such that $g$ lies below (above)~$b$.
Motivated by this, we say that an inner border $b$ is \emph{seen from below (above)} when $b$ is seen by a guard $g$, and $g$ lies below (above)~$b$.

A \emph{state} of a border $b$ containing the positions $p_1,p_2$ is a triple $(O,M, D)$ where $O$ denotes the order of~$b$, $M$ is the subset of $\{p_1,p_2\}$ indicating which positions are chosen as guards, and $D \in \{ \text{below}, \text{above} \}$ the \emph{seeing direction} of~$b$.
The \emph{score} $s$ of $b$ is defined as the shortest distance of $p_1$ and $p_2$ to a guard lying below $b$.

Each border $b$ is associated with a set $S_b$ of \emph{correct pairs} being made up of a state and a respective score. 
For an outer border $b$ the initialization of $S_b$ depends on which positions $p_1,p_2$ of $b$ are vertices of $\polygon$, i.e., allowed to be chosen as guards. 
Let $p_1$ be smaller than $p_2$. We define $S_b$ for an outer border as follows:

\begin{itemize}
	\item If only $p_1$ is a vertex:
	\begin{eqnarray*}
		\{ & ((-1, \{ p_1 \}, \text{below}), 0),&\\
		   & ((\phantom{-}0, \{\phantom{p_1} \}, \text{above}), \infty) & \}
	\end{eqnarray*}
	\smallskip
	\item If only $p_2$ is a vertex:
	\begin{eqnarray*}
		\{ & ((1, \{ p_2 \}, \text{below}), 0),&\\
		   & ((0, \{\phantom{p_1} \}, \text{above}), \infty) & \}
	\end{eqnarray*}
	\smallskip
	\item If $p_1$ and $p_2$ both are vertices:
	\begin{eqnarray*}
		\{ & ((-1, \{ p_1 \}\;\phantom{p_2}, \text{below}), 0),&\\
		   & ((\phantom{-}1, \phantom{p_1,}\;\{p_2 \}, \text{below}), 0) ,&\\
		   & ((\phantom{-}0, \{p_1, p_2 \}, \text{below}), 0), &\\
		   & ((\phantom{-}0, \{\phantom{p_1,p_2} \}, \text{above}), \infty) & \}
	\end{eqnarray*}
\end{itemize}

So, e.g., $((-1,\{ p_1\}, \text{below}),0)$ represents that a guard is placed in position $p_1$ but not in~$p_2$, $((0,\{ p_1,p_2\}, \text{below}),0)$ that a guard is placed in both $p_1$ and $p_2$, and $((0,\{ \}, \text{above}),\infty)$ denotes that a guard is placed neither in $p_1$ nor $p_2$. 

For every inner border $b$ the set $S_b$ is initialized as the empty set.
The set of possible states for an inner border is the set of all combinations of values regarding $O$, $M$, and $D$. 
Based on the initialization of the outer borders, we compute the sets of all correct states for the inner borders in the order induced by the directed tree $T$. 
In particular, we initially mark all leaves, i.e., outer borders as \emph{processed} and the remaining vertices of $T$ as \emph{unprocessed}.
Let $w$ be a vertex of~$T$ whose children $v_1,\dots,v_k$, for $k \in \{1,2,3 \}$, are all processed. 
We compute the set of all correct states of $w$ and the respective optimal score values individually depending on the number of children of $w$. 
If $w$ has one child for each correct pair $((O_v,M_v,D_v),s_v)$ of $v$, we add all \emph{combinable} pairs $((O_w,M_w,D_w),s_w)$ for $w$ as a correct pair for~$w$, where combinable means that $((O_v,M_v,D_v),s_v)$ and $((O_w,M_w,D_w),s_w)$ do not contradict. 
The value $s_w$ results from the shortest distances to the positions involved in the border $v$. 
In~a similar manner we process the cases of two and three children. Finally, we return the smallest score value of a correct state of a border of $c$.

The runtime of our algorithm is linear in the number of vertices of $\polygon$ because each vertex from $T$ can be processed in constant time. This concludes the proof of Theorem~\ref{thm:optimal_for_thin_polyominos}.

\section{Conclusion and future work}\label{sec:conclusion}

We introduced the dispersive art gallery problem and investigated it for vertex guards in polyominoes. 
We developed an algorithm that constructs worst-case optimal solutions of dispersion distance $3$, and showed that it is \NP-complete to decide whether a dispersion distance of $5$ can be achieved. 
We were also able to find a linear-time dynamic programming approach to compute guard sets of maximum dispersion distance for tree-shaped polyominoes.

Several open questions remain. 
Is it possible to close the gap to the worst-case, i.e., is deciding whether a dispersion distance of $4$ can be achieved \NP-hard as well?
How hard is the problem in simple polyominoes? 
Is it possible to compute worst-case solutions in the case of non-simple polyominoes? 
It seems very promising that our methods can be extended to non-simple polyominoes.
%Other open questions concern approximation algorithms.
%Is there a constant-factor approximation for this problem?

What can be said about the ratio between the cardinality of guard sets in optimal solutions for the dispersive and the classic art gallery problem? 
As~shown in~\cref{fig:introduction} this ratio is at least $2$ in simple polyominoes, while the ratio between the dispersion distances increases arbitrarily.

What can be said about the dispersive art gallery problem in terrains, or general polygons?
\bibliography{bibliography}

\end{document}